\newif\ifnotes
\def\noteson{%
  \notestrue
    \gdef\SRI##1{
        \marginnote{\color{red} \footnotesize{##1}}
    }
}
\newcommand{\diag}{\textrm{diag}}
\newcommand{\optval}{\ensuremath{\alocal}}
\newcommand{\alocal}{\ensuremath{\alpha_{\textsc{local}}}}
\newcommand{\compratio}{\ensuremath{O(\optval)}}
\newcommand{\tO}{\ensuremath{\widetilde{O}}}
\newcommand{\weightinv}{\ensuremath{\left( p_e + \nicefrac{1}{m} \right)}}
\newcommand{\weight}{\ensuremath{\weightinv^{-1}}}
\newcommand{\stre}{\normalfont\textrm{stretch}}
\newcommand{\strew}{\stre_w}
\newcommand{\load}{\normalfont\textrm{load}}
\newcommand{\loadw}{\load_w}
\newcommand{\apxload}{\normalfont\textrm{aload}}
\newcommand{\apxloadw}{\apxload_w}
\newcommand{\apx}[1]{\textsc{approx}(#1)}
\newcommand{\sketchmatrix}{\textsc{SketchMatrix}}
\newcommand{\lapsolve}{\textsc{ApproxLapSolve}}
\newcommand{\xlapsolve}{\textsc{LapSolve}}
\newcommand{\recovernorm}{\textsc{RecoverNorm}}
\newcommand{\pe}[1]{\ensuremath{p_e^{(#1)}}}
\newcommand{\xe}[1]{\ensuremath{x_e^{(#1)}}}
\newcommand{\we}[1]{\ensuremath{w_e^{(#1)}}}
\newcommand{\xv}[1]{\ensuremath{x^{(#1)}}}
\newcommand{\XX}[1]{\ensuremath{X^{(#1)}}}
\newcommand{\WW}[1]{\ensuremath{W^{(#1)}}}
\newcommand{\MM}[1]{\ensuremath{M^{(#1)}}}
\newcommand{\potential}[1]{\ensuremath{\| \xv{#1} \|_1}}
\newcommand{\eLf}{\ensuremath{b_e L\pinv b_f\tp}}
\newcommand{\aeLf}{\ensuremath{|\eLf|}}
\newcommand{\tU}{\ensuremath{\widetilde{U}}}
\newcommand{\sketchd}{\ensuremath{\delta}}
\newcommand{\sketche}{\ensuremath{\epsilon}}
\newcommand{\lsolved}{\ensuremath{\delta}}
\newcommand{\lsolvee}{\ensuremath{\epsilon_L}}
\newcommand{\loade}{\ensuremath{\epsilon}}
\newcommand{\ComputeRouting}{\textsc{ComputeRouting}}
\newcommand{\GetApproxLoad}{\textsc{GetApproxLoad}}
\newcommand{\assumeexact}{assuming exact Laplacian solver}
\newcommand{\median}{\textsc{median}}
\newcommand{\edash}{\ensuremath{\epsilon'}}
\newcommand{\medianx}{\ensuremath{\median(|x_1|, |x_2|, \ldots, |x_\ell|)}}
\newcommand{\mediany}{\ensuremath{\median(|y_1|, |y_2|, \ldots, |y_\ell|)}}
\newcommand{\alt}{\ensuremath{\apxload_{w^{(t)}}}}
\newcommand{\lt}{\ensuremath{\load_{w^{(t)}}}}
\newcommand{\ope}{\ensuremath{(1+\epsilon)}}
\newcommand{\ome}{\ensuremath{(1-\epsilon)}}
\newcommand{\opet}{\ensuremath{\left(1+\frac{\epsilon}{2}\right)}}
\newcommand{\omet}{\ensuremath{\left(1-\frac{\epsilon}{2}\right)}}
\newcommand{\lb}{\ensuremath{n^{-2}}}
\newcommand{\lbnm}{\ensuremath{n^{2}}}
\newcommand{\ub}{\ensuremath{2n}}
\newcommand{\totalb}{\ensuremath{2n^3}}
\newcommand{\dem}{d}
\newcommand{\tR}{\ensuremath{\widetilde{R}}}
\newcommand{\tC}{\ensuremath{\widetilde{C}}}
\newcommand{\largest}{\ensuremath{K}}
\begin{document}
\title{Electrical Flows for Polylogarithmic Competitive Oblivious Routing}
\author[1]{Gramoz Goranci}
\affil[1,5]{Faculty of Computer Science, University of Vienna, Austria}
\author[2]{Monika Henzinger}
\affil[2]{Institute of Science and Technology Austria (ISTA), Klosterneuburg, Austria}
\author[3]{Harald R\"acke}
\affil[3]{Technical University Munich, Germany}
\author[4]{Sushant Sachdeva}
\affil[4]{University of Toronto, Canada}
\author[5]{A. R. Sricharan}
\affil[5]{UniVie Doctoral School Computer Science DoCS}

\date{}

\maketitle

\SetTracking{encoding=*, shape=sc}{20}

\begin{abstract}
Oblivious routing is a well-studied paradigm that uses static precomputed routing tables for selecting routing paths within a network. Existing oblivious routing schemes with polylogarithmic competitive ratio for general networks are tree-based, in the sense that routing is performed according to a convex combination of trees. However, this restriction to trees leads to a construction that has time quadratic in the size of the network and does not parallelize well.

In this paper we study oblivious routing schemes based on electrical routing. In particular, we show that general networks with $n$ vertices and $m$ edges admit a routing scheme that has competitive ratio $O(\log^2 n)$ and consists of a convex combination of only $O(\sqrt{m})$ electrical routings. This immediately leads to an improved construction algorithm with time $\tilde{O}(m^{3/2})$ that can also be implemented in parallel with $\tilde{O}(\sqrt{m})$ depth. \end{abstract}

\section{Introduction}

Oblivious routing schemes use static-precomputed routing tables for selecting
routing paths instead of routing paths that adapt dynamically  to the observed
traffic pattern of a parallel system. While at first glance this restriction
seems like a serious barrier to obtaining good performance, it has been shown
that in undirected networks oblivious routing does not only provide good
theoretical guarantees \cite{Rae02,HHR03,Rae08},
but is also an excellent choice
for practical implementations \cite{AC03,KYY+18}
due to its simple structure.

Formally an oblivious routing scheme provides a unit flow $f_{s,t}$ between
every source-target pair $(s,t)$ in the network. When a demand $d_{s,t}$
between $s$ and $t$ appears, this unit flow is scaled by the demand to provide
the required flow between source and target. When using an oblivious routing scheme for packet routing the path of a
packet is chosen according to the flow so that the probability that the packet
takes a certain edge is equal to the flow value along that edge. The main
strength of any oblivious routing algorithm stems from the fact that determining
the next hop for a packet can be done via a simple table lookup after
precomputing the necessary routing tables.

In this paper we consider oblivious routing algorithms that aim to minimize
network \emph{congestion}, i.e., the maximum flow on any edge of the network.
Most existing oblivious routing algorithms for this congestion cost-measure are \emph{tree based}
\cite{Rae02,Rae08,HHR03,RST14,Englert2009}. This means a convex combination of trees is
embedded into the network. A routing path between two vertices $s$ and $t$ is
then in principle chosen by first sampling a random tree and then following the
path between $s$ and $t$ in this tree.

Another approach for oblivious routing is to use \emph{electrical flows}. In an
electrical flow routing one assigns a \emph{resistance} (or its inverse, which is called \emph{conductance}) to each edge of
the graph. The flow between two vertices $s$ and $t$ is then defined by the
current that would result when adding a voltage source between $s$ and $t$.
Lawler and Narayanan~\cite{LN09mixing} studied the performance of electrical flows as an oblivious routing scheme. They show that if every edge is assigned unit resistance, then electrical routing has a \emph{competitive ratio}
of $\min\{\sqrt{m},O(T_{\mathrm{mix}})\}$
against any $\ell_p$-norm \emph{simultaneously}, where the competitive ratio is the ratio of the cost of routing any demand using oblivious routing to the cost of the optimal routing for that demand, and
$T_{\mathrm{mix}}$ is the mixing time of a random walk on the graph. %
Kelner and Maymounkov~\cite{KM11electric} consider electrical
routing on expander graphs (again with uniform resistances) and show that this scheme achieves small competitive ratio and is quite robust under edge deletions. Schild, Rao, and Srivastava~\cite{Schild2018} study the average length of electrical flow paths, and as a consequence obtain $O(\log^2 n)$ competitive ratio for the special family of \emph{edge-transitive} graphs. However, no non-trivial bound on the competitive ratio is known for electrical routing on general networks. This leads to the following important question:

\begin{center}
\emph{Does there exist an electrical routing (with \emph{non-uniform} conductances) that obtains a polylogarithmic competitive ratio with respect to the congestion cost-measure?}
\end{center}

In this paper we give a partial answer to this question by showing how to obtain a
competitive ratio of $O(\log^2n)$ w.r.t.\ congestion with a \emph{convex combination}
of only $O(\sqrt{m})$ electrical flows\footnote{In personal communication, Sidford and Lee~\cite{SL22} claimed that they had also observed that there exists an oblivious routing scheme based on a convex combination of $O(\sqrt{m})$ electrical flows that achieves $O(\log^2 n)$ competitive ratio.}.
\begin{restatable}{theorem}{MainThm}
\label{thm:MainThm}
Given a capacitated graph $G=(V,E)$ with $n$ vertices and $m$ edges, there is an algorithm that finds an oblivious routing scheme composed of a convex combination of $O(\sqrt{m})$ electrical flows. The algorithm has competitive ratio $O(\log^2n)$, runs in time $\tO(m^{3/2})$, and can be implemented in parallel with $\tO(\sqrt{m})$ depth.
\end{restatable}

More specifically, we show that our competitive ratio is proportional to the best bound on the \emph{localization of electrical flows} (\cref{lem:Localization}), which is currently shown to be $O(\log^2 n)$. Hence, any improvement on the localization bound would lead to an improvement in our competitive ratio.

 In contrast to this, all existing tree based oblivious routing schemes
that guarantee a polylogarithmic competitive ratio require at least $\Omega(m)$
trees\footnote{The routing schemes in~\cite{Rae02} and~\cite{RST14} are based
on a single tree with different embeddings into the network; for this
discussion, this is viewed as several trees.}.
The best achievable competitive ratio with tree based schemes is $O(\log n)$.
Note that electrical flows in particular generalize oblivious routings that are based
on a convex combination of \emph{spanning trees}\footnote{In general tree based
routings may embed trees that contain Steiner nodes, i.e., vertices that are
not part of the network, and they may also use arbitrary paths to connect
embedded vertices.}
(as one could simply give all edges that are not part of
the spanning tree a resistance of infinity). Therefore, our work shows that going from trees
to electrical flows allows us to reduce the support of the convex combination
from $O(m)$ to $O(\sqrt{m})$ with a slight increase in competitive ratio.

This reduction in the support of the convex combination also has an important
implication for the construction time, and in particular for the parallel depth
of the construction.
The state of the art for constructing tree based oblivious routing schemes
(with a polylogarithmic guarantee on the competitive ratio) is the
multiplicative weights update method~\cite{AHK12}.
One iteration computes
a low stretch spanning tree in time $\tO(m)$ and updates weights/distances
on edges. This results in depth $\tO(m)$ and work $\tO(m^2)$ for computing
$O(m)$ trees. R\"acke, Shah and T\"aubig~\cite{RST14} obtained a near-linear time algorithm for computing a single tree flow sparsifiers with quality $O(\log^{4} n)$. While their construction can be adapted to an $O(\log^{4} n)$ competitive oblivious routing scheme, the best-known efficient implementations of such a scheme require at least quadratic time in the size of the network.

We use the multiplicative weights update method with electrical flows. One
iteration computes a set of resistances/conductances that is good on average
for the current edge-weights and updates weights for the next iteration. We
show that by using matrix sketching we can implement one iteration in
time $\tO(m)$ as in the tree case. Because electrical flows minimizes the $\ell_2$-squared norm of the flow values, we can show that we only require $O(\sqrt{m})$ iterations to obtain a good oblivious routing scheme.
This results in depth $\tO(\sqrt{m})$
and work $\tO(m^{3/2})$ for computing the convex combination of electrical
flows\footnote{Note that computing the routing tables for quickly determining
the outgoing edge from the header of a packet adds additional work
for tree-based and electrical-flow-based routing. For tree based routing this
can be done with work $\tO(n^2)$ per tree and depth $\tO(1)$. For electrical
flow based routing the work is $\tO(mn)$ per flow with depth $\tO(1)$.
See Section~\ref{sec:repPRAM}.}.

We show that our techniques also apply to oblivious routing measured with respect to the $\ell_1$-norm of loads on the edges,
where we
obtain a competitive ratio of $O(\log^2n)$ with $O(\sqrt{m})$ electrical flows.
See \cref{sec:l1} for details.

\subsection*{Related work}
The study of oblivious routing was initiated by Valiant and Brebner~\cite{VB81}, who
developed an oblivious routing protocol for routing in the hypercube that
routes any permutation in time that is only a logarithmic factor away from
optimal. For the cost measure of minimizing congestion,
R\"acke~\cite{Rae02} proved the existence of an oblivious routing scheme with
polylogarithmic competitive ratio for any undirected network. This result was
subsequently improved to a competitive ratio of
$O(\log^2n \log\log n)$~\cite{HHR03}, and then to a competitive ratio of
$O(\log n)$~\cite{Rae08}, which is optimal.
Englert and Räcke~\cite{Englert2009} extend these results to oblivious routing
when the cost-measure is the $\ell_p$-norm of the edge congestions.

Oblivious routing by electrical flows was first considered by Harsha et
al.~\cite{HHN+08} for the goal of designing oblivious routing schemes that minimize
the cost-measure $\|\cdot\|_2^2$, also known as \emph{average latency}, for the case of a single target.
In this scenario, using electrical flows is a very intuitive approach
as it minimizes the $\|\cdot\|_2^2$ of the flow. Specifically, in their algorithm, every
source in the oblivious routing scheme optimizes its flow as if no other source
was active. They show that this gives a competitive ratio of $O(\log n)$ for the $\ell_2$-norm squared of the flow.

Electrical flows also play a major role as a tool for speeding up flow
computations (see e.g.\ \cite{Mad16,BGJ+22,GaoLP21,DGG22,AMV20,LRS13,KLY+14,CKM+11}). This is
due to the fact that electrical flow computations reduce to solving Laplacian
systems, which is a task that can be performed in nearly linear time. In addition, flow algorithms
usually aim to minimize the $\ell_{\infty}$-norm in some way. The fact that the
$\ell_2$-norm is closer to the $\ell_{\infty}$ than e.g.\ the $\ell_1$ norm
helps in these optimizations.
Electrical flows have also been used to generate alternative routes in road networks~\cite{SFG21}.

 Ghaffari, Haeupler, and Zuzic~\cite{GHZ21}
 introduced the concept of \emph{hop-constrained} oblivious routing. This is an
 oblivious routing scheme that is given an additional parameter $h$ that can be
 viewed as an upper bound on the dilation used by an optimum routing. They show
 that one can obtain an oblivious routing scheme that guarantees a congestion
 of $\tO(C_h)$ and a dilation of $\tO(h)$, where $C_h$ is the optimum
 congestion that can be obtained with routing paths of length at most $h$.
 However, their construction still suffers from the same drawback as tree-based
 routing schemes in that it requires $\Omega(m)$ iterations of multiplicative
 weights, which results in a depth of $\Omega(m)$. There has also been recent interest in computing competitive routing schemes with small support, as shown in the work of Zuzic, Haeupler, and Roeyskoe~\cite{zuzic2023sparse} in the context of \emph{semi-oblivious} routing.

 In the sub-polynomial competitive ratio regime, Kelner et al.~\cite{KLY+14} give an algorithm to compute an $n^{o(1)}$ competitive oblivious routing in $m^{1+o(1)}$ time, and use this as a subroutine to compute a $(1+\epsilon)$-approximate maximum $s$-$t$ flow and concurrent multicommodity flow in almost-linear time.
 Haeupler et al.~\cite{HRG22} present a construction of routing tables for
 $h$-hop routing that runs in $O(D+\mathrm{poly}(h))$ rounds of the
 CONGEST-model of distributed computing, and guarantee a competitive ratio of
 $n^{o(1)}$. This essentially means that the resulting packet routing algorithm
 can schedule any permutation in time $n^{o(1)}$ whenever this is possible, and it's extremely fast.
 However, the scheme crucially uses the fact that routing requests are
 initiated from both communication partners; it requires name-dependent routing
 (the node ids are changed during initialization to allow for compact routing
 tables), and it does not achieve a polylogarithmic competitive ratio.

\subsection*{Technical Overview}
For the so-called \emph{linear oblivious routing schemes} (such as tree-based routing schemes
or electrical flows) it is well known that the worst demand is a demand of $1$
along every edge in the unweighted network. An optimal algorithm can trivially route this demand by
sending $1$ unit of flow along every edge in the network, and, hence, has maximum congestion $1$. Let
$\load_{w}(e)$ denote the load on edge $e$ generated by an electrical flow
routing (with conductances $w$) when routing this worst case demand. The
competitive ratio of the routing scheme is then $\max_e\load_{w}(e)$. We can
write the search for a good convex combination of electrical flow routings $w_i$ as
a linear program:
\begin{equation*}\label{loadlp}%
\begin{array}{@{}ll@{}rl}
 \text{min}  &  \alpha &  &   \\[0.1cm]
 \text{s.t.}  & \forall e \quad  \quad &   \sum_{i} \nolimits \lambda_i \cdot \load_{w_i}(e) &\leq \alpha \\[0.2cm]
  & &  \sum_{i} \nolimits \lambda_i & = 1  \\[0.2cm]
  & \forall i \quad \quad &  \lambda_i & \geq 0\enspace.
\end{array}
\end{equation*}
We want to find a good solution to this problem, say  with
objective value $\beta$.
For this the multiplicative weights update method maintains a set of
weights $p_e \geq 0$ on the edges with $\sum_ep_e=1$. In each iteration
it computes \emph{one} routing $R_i$ such that $R_i$ fulfills
$\sum_ep_e\load_{R_i}(e)\le\beta$, i.e., the weighted total load over all edges for $R_i$ is at most $\beta$.
Here $\load_{R_i}(e)$ denotes the load induced on edge $e$
when routing a demand of 1 along every edge using the routing $R_i$.
This means \emph{on average} the edges have load at most $\beta$ when using
routing $R_i$. For the next iteration the weight $p_e$ of edges with
$\load_{R_i}(e)>\beta$ is increased while edges with $\load_{R_i}(e)<\beta$
decrease their weight. In the end the convex combination of all the routings
will have load at most $\beta$ for every edge.

To apply this scheme we first need a bound $\beta$ such that
we always can (efficiently) find a routing with
$\sum_ep_e\load_{R_i}(e)\le\beta$. For tree-based routing one can use
small stretch trees~\cite{FRT03,ABN08} to solve this problem with $\beta=O(\log n)$.
For electrical routing we show that we can bound $\beta$ using a specific choice of parameters in the so-called \emph{localization
lemma} due to Schild et al.~\cite{Schild2018}.
Informally, the localization lemma states that the average length of flow paths
in an electrical routing is small. In particular, we show that we can find parameters for localization which guarantee the existence of
conductances $w$ such that the load when routing via the
electrical flow with conductances $w$ (say $\load_{w}$) fulfills
$\sum_ep_e\load_{w}(e)\le\beta$, with $\beta=O(\log^2n)$.

Secondly we need to be able to efficiently compute the load on every edge
after computing the routing $R_i$ in an iteration, in order to be able to
adjust the weights for the next iteration. Na\"ively routing across each edge using $R_i$ would require time $O(m^2)$ for solving $m$
Laplacian systems to compute the electrical flow between the end points of each edge. We use a sketching result due to Indyk~\cite{Ind06sketch} for approximating
the $\ell_1$-norm of vectors. This allows us to approximate the loads due to all $m$ electrical flows in time
$\tO(m)$. We show that this approximation is still sufficient for the
multiplicative weights method to work correctly.

The advantage of using electrical flows within the multiplicative weights
update framework is that it allows to derive a better bound on the convergence time.
The number of iterations required for the method is
proportional to how far each inequality, namely $\load_w(e) \le \beta$ for
each $e$, is violated by the routing found in each iteration. This is the
so-called \emph{width} of the algorithm, and we bound the width by
$O(\sqrt{m})$ by regularizing the weights to be nearly uniform.
This in turn implies that we
require only $O(\sqrt{m})$ iterations.

\section{Preliminaries}
\label{sec:Preliminaries}

For simplicity of presentation, we present the uncapacitated case in the main body of the paper, and explain the changes required for the capacitated case in \cref{sec:capacitated}.

\subsection{Oblivious Routing}

\paragraph*{Graph}
We will route on
unweighted, undirected graphs. Choose an arbitrary orientation of the edges and let $B$ be the $m \times n$ incidence matrix of $G$. Denote by $b_e$ the row of $B$ corresponding to edge $e$.

\paragraph*{Demand}
A demand $\chi \in \mathbb{R}^n$ encodes the flow deficit/surplus requirements for each vertex in the graph. A demand is \emph{valid} if $\sum_i \chi_i = 0$, that is, the requirements at all the vertices of the graph cancel out. Let $X \subseteq \mathbb{R}^n$ denote the space of all valid demands.
Note that the vector $b_e\tp$ has unit demand across the endpoints of $e$ (with the direction of demand decided by the orientation of the edge). If the demands are only between pairs of vertices $(s,t)$, then we encode the demands between all pairs of vertices in a \emph{demand vector} $ \dem \in \R^{\binom{n}{2}}$.

\paragraph*{Flow}
A flow corresponding to a demand $\chi$ is a vector $f \in \mathbb{R}^m$ that ensures that the flow out of a vertex satisfies the demand, namely, $\chi_v + \sum_{u\sim v} f_{uv} = 0$ for all $v \in V$. For example, an $(s,t)$-flow is one where the demands of every vertex except $s$ and $t$ are zero.

\paragraph*{Oblivious routing}
An $m \times n$ matrix $M$ that maps vertex demands to edge flows is called an \emph{oblivious routing scheme}\footnote{While an oblivious routing scheme in general could be non-linear, we only consider linear routing schemes in this paper.} if  for any valid demand $\chi$  the vertex demands created by the flow generated by $M$ are equal to the input vertex demands. Formally, we require that $B\tp M$ is identity on $X$, i.e., $B\tp M\chi = \chi$ for all valid demands $\chi \in X$.

\paragraph*{Competitive ratio}
For an oblivious routing $M$ and any collection of demands $D= \{\chi_i\}_i$,
the flow across edge $e$ for routing demand $\chi_i$ is given by $|b_e M \chi_i|$. Thus the total flow (without cancellations) across edge $e$ when routing all the demands in $D$ \emph{independently} using $M$ is given by $f_e = \sum_{\chi_i \in D} |b_e M \chi_i|$.
We would like to minimize the $\ell_p$ norm (in this paper, $p \in \{1, \infty\}$) of the flow resulting from routing these demands independently with an oblivious routing $M$, as compared with the optimal routing that can adaptively choose the routing based on the demands.
If $OPT_p(D)$ is the cost resulting from the optimal routing, and $f$ is the flow described above, then the competitive ratio of the oblivious routing scheme $M$ is given by
\begin{equation*}
\beta_p(M) = \max\limits_{D} \frac{ \|f\|_p}{OPT_p(D)}
\end{equation*}
Kelner and Maymounkov~\cite[Theorem 3.1]{KM11electric} show that for a linear oblivious routing on an uncapacitated graph, the worst case demand set is routing one unit of flow across every edge of the graph, i.e., the worst case requests are the columns of $B\tp$. Since the optimal routing is to route each demand across the same edge, $OPT(B\tp) = 1$. For $p \in \{ 1, \infty \}$, the competitive ratio is then given by $ \beta_p(M) = \| MB\tp \|_p $.

\paragraph*{Load}
For oblivious routing $M$, the flow across edge $e$ for the demand $b_f$ is given by $\load_M(f\to e)$. This can be obtained from the matrix $M$ as $\load_M(f\to e) = |(Mb_f\tp)_e|$. The sum of all flows across an edge $e$ for every possible edge demand vector $b_f$ is then the load across the edge $e$.
\[
\load_M(e) = \sum_f \load_M(f \to e) = \sum_f |(Mb_f\tp)_e|
.\]

Note that the congestion, or the competitive ratio for the $\ell_{\infty}$ norm, is then just the maximum load on any edge in the graph by definition of $\| MB\tp \|_{\infty}$.

\paragraph*{Convexity of load}
Load is convex on oblivious routings. For two oblivious routings $M_0$ and $M_1$, let $M_{\lambda} = \lambda M_1 + (1 - \lambda) M_0$ for $\lambda \in (0, 1)$. Then
\begin{align*}
\load_{M_{\lambda}}(f \to e)
&= |((\lambda M_1 + (1- \lambda) M_0)b_f\tp)_e|
\le \lambda |(M_1 b_f\tp)_e| + (1-\lambda)|(M_0b_f\tp)_e| \\
&= \lambda \load_{M_1}(f \to e) + (1 - \lambda) \load_{M_0} (f \to e)
\end{align*}

\paragraph*{The simplex} We use $\Delta^m = \{ x \in \R^m: \sum x_i = 1, x_i \ge 0 \}$ to denote the $m$-simplex. Any $p \in \Delta^m$ then represents a probability distribution over the edges.

\subsection{Electrical Flow}

\paragraph*{Graph}
Let $(G, w)$ be a weighted, undirected graph with edge weights $\{w_e\}$\footnote{Think of them as the conductances of the edges in an electrical network.}. Choose an arbitrary orientation of the edges. $B$ is the $m \times n$ incidence matrix of $G$. If $W = \diag(w)$ is the $m \times m$ diagonal matrix of edge weights, then the $n \times n$ \emph{Laplacian matrix} is given by $L = B\tp W B$, and $L\pinv$ is its pseudoinverse.
Notationally, when we use $L\pinv$, we always mean the pseudoinverse of the Laplacian with respect to the current weighted graph and \emph{not} the underlying unit-weighted graph.

\paragraph*{Electrical flow}
Given a weighted graph $(G,w)$ and a demand $\chi$, the electrical flow corresponding to $\chi$ is given by $WBL\pinv \chi$.

\paragraph*{Load}
By the above characterization of electrical flow, the load across an edge $e$ for unit current demand across $f$ is given by $\loadw(f \to e) = w_e | b_e L\pinv b_f\tp|$. The load on an edge is the sum of all loads on the edge for demands $b_f$ for all edges $f$, given by
\(
\loadw(e) = w_e \sum_f |b_e L\pinv b_f\tp|
.\)

\paragraph*{Weights determine electrical flow}
An electrical flow is completely determined by the edge weights. That is, given an edge weighting $w$ of an unweighted graph $G$, there is a unique routing $WBL\pinv$ corresponding to this network. If the unweighted graph $G$ is fixed, we denote the load due to weights $w$ on an edge $e$ as $\loadw(e)$.

\paragraph*{Electrical flows are oblivious}
Let $G$ be an unweighted graph for which we wish to find an oblivious routing. Then for any weighting $w$ of the graph, the corresponding electrical flow $M_w = WBL\pinv$ is an oblivious routing. This is easy to check, since \(
B\tp WBL\pinv = (B\tp WB) (B\tp W B)\pinv
\).
We will refer to $M_w$ as the \emph{electrical routing with respect to the weights $\{w_e\}$}.

\paragraph*{Localization}
We will repeatedly use the following so-called \emph{localization lemma}. A concrete instantiation with $\alocal = c \log^2(n)$ for some constant $c$ was given by Schild et al.~\cite{Schild2018}.
\begin{restatable}[Localization \cite{Schild2018}]{lemma}{Localization}
\label{lem:Localization}
Let $G$ be a graph with weights $\{w_e\}$. Then for any vector $\ell \in \R_{\ge 0}^m$, \[
    \sum_{e, f \in E} \ell_e \ell_f \sqrt{w_e w_f} |b_e L_G\pinv b_f\tp | \le \alocal \cdot \|\ell\|_2^2
\]
\end{restatable}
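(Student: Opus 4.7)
The plan is to view the left-hand side as the weighted entrywise $\ell_1$ mass of an orthogonal projection matrix and control this mass via a low-stretch spanning tree. Setting $\widetilde{B} = W^{1/2} B$, the matrix $\Pi = \widetilde{B} L\pinv \widetilde{B}\tp$ is the orthogonal projection onto $\text{im}(\widetilde{B})$, with entries $\Pi_{e,f} = \sqrt{w_e w_f}\, b_e L\pinv b_f\tp$. The lemma is then equivalent to the operator-norm bound $\||\Pi|\|_{\text{op}} \le \alocal$, where $|\Pi|$ denotes the entrywise absolute value. The rank-$(n-1)$ projection $\Pi$ satisfies $\trace(\Pi) = n-1$, but bounding the off-diagonal mass requires extra structure beyond the trace bound.

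To obtain that structure, I would fix a random low-stretch spanning tree $T$ of $G$ with expected stretch $\sigma = O(\log n)$ per edge (e.g.\ via Abraham--Neiman~\cite{ABN08}). The reason this is useful is that on $T$ alone, $b_e L_T\pinv b_f\tp$ admits a clean combinatorial formula as a signed sum $\sum_{g \in P_e \cap P_f} \pm 1/w_g$ over tree edges lying on the overlap of the fundamental tree paths $P_e$ and $P_f$. Bounding absolute values by sums of absolute values and swapping the order of summation, the tree analogue of the left-hand side becomes $\sum_{g \in T} \frac{1}{w_g}\bigl(\sum_{e:\, g \in P_e} \ell_e \sqrt{w_e}\bigr)^2$. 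A careful Cauchy--Schwarz applied to the inner square, paired with the per-edge expected-stretch bound, controls this by $\sigma \cdot \|\ell\|_2^2$.

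To lift the bound from $T$ to the full graph $G$, I would use a spectral comparison between $L\pinv$ and $L_T\pinv$: since $L \succeq L_T$, effective-resistance-type quantities in $G$ are dominated by their tree counterparts, at the cost of a factor of $\sigma$ coming from the spectral sparsification guarantee on stretch-reweighted trees. Combined with the previous step, this gives the final $O(\sigma^2) = O(\log^2 n)$ prefactor.

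The main obstacle I expect is reconciling the entrywise absolute values in the statement with spectral tools such as Rayleigh monotonicity and PSD comparison, which only control quadratic forms and therefore cannot be applied to $|\Pi|$ directly. The approach in~\cite{Schild2018} is to first decompose each $|b_e L\pinv b_f\tp|$ as a signed sum along a tree path, distribute absolute values across these pieces so that everything becomes a nonnegative combination, and only then invoke spectral comparisons on the aggregated quantities. Getting exactly $\log^2 n$ (rather than a higher power of $\log n$) depends sensitively on using the stretch bound in expectation over the random tree and on balancing the two Cauchy--Schwarz applications so no extra logarithmic factors are introduced.
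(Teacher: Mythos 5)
First, a point of reference: the paper does not prove \cref{lem:Localization} at all --- it is imported as a black box from Schild, Rao and Srivastava~\cite{Schild2018} (the paper only instantiates $\alocal = c\log^2 n$). So the comparison here is not against an in-paper argument but against the cited result, and the question is whether your sketch would actually prove it.

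It would not, and the failure is exactly at the step you yourself flag as the main obstacle. The tree half of your plan is fine: for a tree, $b_e L_T\pinv b_f\tp$ is a signed sum of resistances over $P_e \cap P_f$, and the resulting nonnegative double sum can be charged to stretch. But the lemma is about $L_G\pinv$, and the proposed lift --- ``since $L \succeq L_T$, effective-resistance-type quantities in $G$ are dominated by their tree counterparts at the cost of a factor $\sigma$'' --- only controls quadratic forms. From $L_G \succeq L_T$ one gets $x\tp L_G\pinv x \le x\tp L_T\pinv x$ on the space orthogonal to the all-ones vector, hence Rayleigh monotonicity for the diagonal entries $b_e L\pinv b_e\tp$; it gives no entrywise control of the signed off-diagonal entries $b_e L_G\pinv b_f\tp$. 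Indeed these can be strictly larger in absolute value than their tree counterparts, which are exactly zero whenever $P_e \cap P_f = \emptyset$, so no entrywise domination by a single tree (even with a $\sigma$ or $\sigma^2$ loss) can hold. Since the left-hand side of the lemma is $v\tp |\Pi| v$ with $v_e = \ell_e$ and $|\Pi|$ the \emph{entrywise} absolute value of the projection $\Pi = W^{1/2}BL\pinv B\tp W^{1/2}$, and entrywise absolute value destroys PSD comparison, the spectral tools you invoke cannot close this step. The signed-tree-path decomposition you attribute to~\cite{Schild2018} is a property of tree Laplacians only; the analogous exact identity for general $G$ expresses $w_e b_e L_G\pinv b_f\tp$ as an expectation of signed path indicators over a $w$-\emph{uniform} random spanning tree, and pushing the absolute value inside that expectation throws away precisely the cancellation the lemma needs (and $w$-uniform trees are not low-stretch trees, so the resulting quantity is not $O(\log^2 n)$ in general). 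What the projection structure alone buys is the row bound $\sum_f \Pi(e,f)^2 \le 1$ --- this is the paper's Lemma~\ref{lem:widthHelpfulLemma} and, via Cauchy--Schwarz, yields only $O(\sqrt{m})$-type bounds as in \cref{lem:WidthBound}; getting down to $\log^2 n$ is the genuinely hard content of~\cite{Schild2018}, whose argument works directly with $\Pi$ and a multi-scale charging scheme rather than routing through a fixed low-stretch spanning tree. So as written, your proposal has a real gap at the tree-to-graph lift, which is the heart of the lemma rather than a technicality.
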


\subsection{Matrix sketches}
To speed up our algorithms we use a sketching result by
Indyk~\cite[Theorem~3]{Ind06sketch}, adapted from a formulation by Schild~\cite[Theorem~9.13]{Sch18spanning}.

\begin{restatable}[\cite{Ind06sketch}]{theorem}{SketchApprox}
\label{thm:SketchApprox}
Given $m \in \mathbb{Z}_{\ge 1}$, $\sketchd \in (0,1)$, and $\sketche \in (0, 1)$, there is a sketch matrix $C = \sketchmatrix(m, \sketchd, \sketche) \in \R^{\ell \times m}$ and an algorithm $\recovernorm(s)$ for $s \in \R^{\ell}$
such that the following properties hold:
\begin{itemize}
    \item (Approximation) For any $v \in \R^m$, with probability at least $1 - \sketchd$ over the randomness of \sketchmatrix, the value of $r = \recovernorm(Cv)$ is \[
            (1-\sketche) \|v\|_1 \le r \le (1+\sketche) \|v\|_1
    \]
    \item $\ell = c/\sketche^2 \cdot \log(1/\sketchd)$ for some constant $c>1$
    \item (running time) \sketchmatrix\ and \recovernorm\ take time $O(\ell m)$ and $\poly(\ell)$ respectively.
\end{itemize}
\end{restatable}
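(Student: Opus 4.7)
The plan is to instantiate the classical Indyk sketch via the $1$-stable (Cauchy) distribution. The key structural fact is: if $X_1, \ldots, X_m$ are i.i.d.\ standard Cauchy random variables, then for every $v \in \R^m$ the linear combination $\sum_i v_i X_i$ is distributed exactly as $\|v\|_1 \cdot X$ for a single standard Cauchy $X$ (this is $1$-stability). This reduces estimating $\|v\|_1$ from a low-dimensional linear sketch to estimating the scale of a Cauchy from i.i.d.\ samples, where one must use the sample median rather than the sample mean (the mean has undefined expectation and infinite variance).

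Concretely, I would set $\ell = c \cdot \epsilon^{-2} \log(1/\delta)$ for a sufficiently large absolute constant $c$, build $C \in \R^{\ell \times m}$ with i.i.d.\ standard Cauchy entries (which can be sampled in $O(1)$ time each from uniforms via the inverse CDF $\tan(\pi(U - \tfrac12))$), and define $\recovernorm(s) = \mathrm{median}(|s_1|, \ldots, |s_\ell|)$. By stability each $|s_i|$ is distributed as $\|v\|_1 \cdot |X_i|$ with $X_i$ standard Cauchy, and since $\Pr[|X_i| \le 1] = \tfrac{2}{\pi}\arctan 1 = \tfrac12$ the true median of $|s_i|/\|v\|_1$ equals $1$, so the output should concentrate around $\|v\|_1$. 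The running-time and sketch-size bounds would then follow directly: constructing $C$ costs $O(\ell m)$, while \recovernorm needs $O(\ell)$ arithmetic operations together with a linear-time median selection, all $\poly(\ell)$.

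The main work, and the principal obstacle, is a tight concentration argument for the sample median. By scale invariance it suffices to handle $\|v\|_1 = 1$. Let $F(t) = \tfrac{2}{\pi}\arctan t$ denote the CDF of $|X|$; then $F(1) = \tfrac12$ and $F'(1) = \tfrac{1}{\pi}$, so for small $\epsilon$ a Taylor expansion gives $F(1+\epsilon) - \tfrac12 = \tfrac{\epsilon}{\pi} + O(\epsilon^2)$ and $\tfrac12 - F(1-\epsilon) = \tfrac{\epsilon}{\pi} + O(\epsilon^2)$. The event $r > 1+\epsilon$ requires at least $\ell/2$ of the $|s_i|$ to exceed $1+\epsilon$, which means the empirical mean of the Bernoulli variables $\mathbf{1}[|s_i| > 1+\epsilon]$ (each with true mean $\tfrac12 - \Theta(\epsilon)$) must deviate from its expectation by $\Theta(\epsilon)$; by Hoeffding's inequality this occurs with probability at most $\exp(-\Omega(\epsilon^2 \ell))$. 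Choosing $c$ large enough makes this at most $\delta/2$, and the symmetric lower-tail event $r < 1-\epsilon$ is bounded identically; a union bound then delivers the $1-\delta$ guarantee. The delicate point is that the linear dependence of $F$ near its median is what makes $\ell = \Theta(\epsilon^{-2}\log(1/\delta))$ samples suffice rather than, say, $\epsilon^{-4}$; this is the step where I would be most careful to keep constants explicit.
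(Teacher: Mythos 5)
Your proposal is correct and is essentially the standard proof of the cited result: the paper itself gives no proof of Theorem~\ref{thm:SketchApprox}, deferring entirely to Indyk~\cite{Ind06sketch} (via Schild's formulation), and that source's argument is exactly your Cauchy $1$-stability plus median-of-$\ell$ estimator with a Hoeffding/CDF-slope concentration bound, which also matches the two properties of $\sketchmatrix$ and $\recovernorm$ (median-based recovery, polynomially bounded entries) that the paper's appendix later relies on. The only point your sketch leaves implicit, and which Indyk's paper handles, is the finite-precision/bounded-randomness issue for sampling Cauchy variables, but this does not affect the statement as used here.
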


We will require that our theorems hold with high probability, and thus we will set $\sketchd$ to be $1/\poly(n)$. We will also set the approximation constant $\sketche$ to be $\nicefrac{1}{2}$, which gives $\ell = O(\log n)$.

\section{Algorithm for routing}

\begin{algorithm}[t]
\SetAlgoLined
\DontPrintSemicolon
\caption{\ComputeRouting, to achieve $\compratio$-competitive oblivious routing.}
\label{alg:MWURoute}
\KwInput{An unweighted graph $G$.}
\KwOutput{An oblivious routing scheme on $G$.}
Set $\rho \gets \sqrt{2m}$ and $T \gets \frac{ 8 \rho \ln m}{\alocal}$\;
Initialize $x_e^{(0)} \leftarrow 1$ for all $e \in E$, and $X^{(0)} \leftarrow m$\;
\For{$t = 1, 2, \ldots, T$} {
    Set $\pe{t} \leftarrow \nicefrac{\xe{t-1}}{\XX{t-1}}$ for all $e \in E$\;
    Set $\we{t} \leftarrow \weight$ for all $e \in E$\;
    Set $\WW{t} \leftarrow \diag(w)$\;
    Set $M^{(t)} \leftarrow W^{(t)} B \left( B\tp W^{(t)} B \right) \pinv$\label{line:Mdef}\;
    Set $\alt \leftarrow \textsc{GetApproxLoad}(G, w, 1/2)$\;
    Set $\xe{t} \leftarrow \xe{t-1} \cdot \left[ 1 + \frac{1}{2\rho} \cdot \alt(e) \right] $\;
    Set $\XX{t} \leftarrow \sum_e \xe{t}$\;
}
\Return{$M^* = \frac{1}{T} \cdot \sum_{t=1}^{T} M^{(t)}$}
\end{algorithm}

\begin{algorithm}[t]
\SetAlgoLined
\DontPrintSemicolon
\caption{\GetApproxLoad, to compute approximate loads for electrical routing.}
\label{alg:xLoadApx}
\KwInput{A graph $G$, weights $\{ w_e \}_{e \in E}$ on the edges, approximation factor $\loade$.}
\KwOutput{Approximation $\{ \apxloadw(e) \}_{e \in E}$ to the load on the edges.}
Let $B$ be the edge-vertex incidence matrix of $G$\;
Let $L := B\tp \diag(w) B$ be the Laplacian matrix\;
Set $C \leftarrow \sketchmatrix(m, n^{-10}, \loade)$\;
Set $X \leftarrow B\tp C\tp$\;
Let $X^{(i)}$ be the $i^{th}$ column of $X$ for all $i \in [\ell]$\;
Set $U^{(i)} \leftarrow \xlapsolve(L, X^{(i)})$ for all $i \in [\ell]$ \label{line:xlapsolve} \;
Set $U \leftarrow (U^{(1)}, U^{(2)}, \ldots, U^{(\ell)})$ \algcomment{$U = (CBL\pinv)\tp$}
Set $\apxloadw(e) \leftarrow w_e \cdot \recovernorm(U\tp b_e)$ for all $e \in E$\;
\Return{$\apxloadw$}
\end{algorithm}

In this section, we give an algorithm that returns a routing which achieves competitive ratio $\compratio$ by taking a convex combination of $\sqrt{m}$ electrical flows, where $\optval$ is the best localization result for electrical flows in \cref{lem:Localization}.
We use the \emph{multiplicative weights update} method (MWU) to obtain this guarantee.
The algorithm is presented in Algorithm~\ref{alg:MWURoute} (\ComputeRouting).
We are essentially solving the following linear program with MWU

\begin{equation*}\label{loadprimal}%
\begin{array}{@{}ll@{}rl}
 \text{min}  &  \alpha &  &   \\[0.2cm]
 \text{s.t.}  & \forall e \quad  \quad &   \sum_{i} \nolimits \lambda_i \cdot \load_{w_i}(e) &\leq \alpha \\[0.3cm]
  & &  \sum_{i} \nolimits \lambda_i & = 1  \\[0.3cm]
  & \forall i \quad \quad &  \lambda_i & \geq 0.
\end{array}
\end{equation*}

The primal asks for a convex combination of electrical routings that gives low competitive ratio, where the coefficients of the convex combination are collected in $\lambda$. The dual of the above linear program is equivalent to solving
\begin{equation}
\label{eq:betastar}
\max_{p \in \Delta^m} \min_i \sum_e p_e \load_{w_i}(e),
\end{equation}
Denote the optimal values of the primal and dual by $\alpha^*$ and $\beta^*$ respectively. Since $\alpha^* = \beta^*$ by strong duality, we can obtain an existential bound on $\alpha^*$ by showing that $\beta^* \le \alocal$. This bound on $\beta^*$ is obtained by bounding (\ref{eq:betastar}) using the weighting returned by \cref{lem:OracleReturn} which returns, for any $p \in \Delta^m$, a  weighting $w$ for which the $p_e$ weighted average load is smaller than $2 \alocal$. The proof of the bound in \cref{lem:OracleReturn} uses localization, and we wish to convert this existential result into an algorithmic one, which we do using MWU in Algorithm~\ref{alg:MWURoute}.

Our MWU algorithm can be described as the following primal-dual algorithm running for $T$ iterations: We initially start with the primal vector $\lambda = 0$ and dual vector $\pe{1} = \nicefrac{1}{m}$. We then build a primal solution
incrementally as follows. At iteration $t$, we look at the dual variables $\pe{t}$ of the edges, and find a particular set of weights $\we{t}$ such that routing with respect to these weights gives low average load with respect to the dual variables, i.e., $\sum_e \pe{t} \lt(e)$ is low. As in the existential case, we show a bound of $\compratio$ on this average load using localization.

Recall that the primal vector $\lambda$ gives the coefficients for a convex combination of electrical routings. We now increase the primal coefficient corresponding to the routing that was found at this iteration, $\lambda_{\MM{t}}$, by $\nicefrac{1}{T}$. We essentially \enquote{add} this routing to the final convex combination that we output at the end of the algorithm.
Now we need to update the dual variables for the next iteration. For the routing returned in the next iteration, we want to reduce the load on edges that had really high load in the current iteration. To this end, we compute the loads induced on each edge (using $\GetApproxLoad$), and adjust the dual variables based on how high the loads for the current routing $\MM{t}$ are.

At the end of $T$ iterations,  $\lambda$ is the uniform combination of all $T$ electrical routings computed during each iteration of the algorithm. The analysis of MWU shows that our dual variable updates ensure low load on all the edges for the returned routing.
The number of iterations $T$ is proportional to how far each primal inequality, namely $\lt(e) \le \alocal$ for each $e$, is violated by the routing found in each iteration. This is the \emph{width} of the algorithm, and we bound the width by $O(\sqrt{m})$.

For the running time, note that evaluating loads for all the edges na\" ively would take time $O(m^2)$ in each iteration, since it would involve calculating $\aeLf$ for every pair of edges $(e, f) \in E^2$. However, the load on each edge can also be expressed as the $\ell_1$ norm of the vector $BL\pinv b_e$.
We use a sketching result by Indyk~\cite{Ind06sketch} stated in Theorem~\ref{thm:SketchApprox},
which gives a sketch matrix $C$ that preserves $\ell_1$ norms up to small multiplicative errors.
Similar ideas have been implicitly used by Schild~\cite{Sch18spanning}, and later in other works~\cite{LS18,FGL21} as well.
Our sketching improves the running time from $O(m^2)$ to $\tO(m)$ per  iteration. Since there are $T = O(\sqrt{m})$ iterations, we get a bound of $\tO(m \sqrt{m})$ for constructing our routing.

For simplicity, we first assume that in \GetApproxLoad\ we have an exact Laplacian solver that runs in time $\tO(m)$, and present below the analysis in \cref{sec:lsolvex}. The analysis when using an approximate Laplacian solver is more technical, and we present it in \cref{sec:apxlsolve}.
The complication to overcome when
using an approximate Laplacian solver in line~\ref{line:xlapsolve} of \GetApproxLoad\ is the following: The input to \recovernorm\ might not necessarily be of the form $Cv$ for some $v \in \R^m$, since we use the approximate Laplacian solver on $B\tp C\tp$, and thus would have obtained $\apx{CBL\pinv} b_e$
at the end. Note that if we instead had $C \apx{BL\pinv b_e}$
this would be fine, since we would get that $\recovernorm (CBL\pinv b_e) \approx \|BL\pinv b_e \|_1$ and $\|B L\pinv b_e\|_1 \approx \|\apx{BL\pinv b_e}\|_1$, and the resulting guarantee would be the product of these two approximation guarantees.

At a high level, we have an approximation to the vector \emph{after} sketching ($\apx{Cv}$) instead of an approximation to the vector \emph{before} sketching ($C\apx{v}$). Since $\recovernorm$ is not a norm, we do not have a guarantee that it behaves well on inputs that are close to each other. We need to argue that $\recovernorm$ still returns a useful answer when given as input a vector that does not belong to the column space of $C$, but is nevertheless close in the $L$ norm to the required vector $Cv$. This requires some technical analysis.

\section{Proof of correctness}
\label{sec:lsolvex}

The proof is an adaptation of the multiplicative weight update proof to our setting.
The proof uses the following three lemmas. Due to their length we defer their proofs to Sections~\ref{sec:proofOracleReturn}, \ref{sec:proofWidthBound}, and \ref{sec:proofApxLoad} respectively.
The first lemma shows that the \emph{($p_e$ weighted) average edge load} is $\compratio$.
\begin{restatable}{lemma}{OracleReturn}
\label{lem:OracleReturn}
For any probability distribution $p \in \Delta^m$, the oblivious routing $M_w$ corresponding to the electrical network with weights $w_e = \weight$ satisfies $\sum_e p_e \loadw(e) \le 2\alocal$.
\end{restatable}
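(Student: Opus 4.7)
The plan is to apply the Localization Lemma \ref{lem:Localization} with a cleverly chosen vector $\ell$ that cancels against the specific weights $w_e = (p_e + \nicefrac{1}{m})^{-1}$. The key observation is that the statement we want has the asymmetric coefficient $p_e w_e$ on $|b_e L^{\dagger} b_f^{\top}|$, whereas localization delivers symmetric coefficients $\ell_e \ell_f \sqrt{w_e w_f}$. So first I would pass to a symmetric intermediate bound, then absorb the asymmetry by a trivial pointwise inequality.

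Concretely, I would choose $\ell_e = \sqrt{p_e + \nicefrac{1}{m}}$. Then $\|\ell\|_2^2 = \sum_e (p_e + \nicefrac{1}{m}) = 1 + 1 = 2$ since $p \in \Delta^m$ and $|E| = m$. Moreover, for this particular choice of $w$,
\[
\ell_e \sqrt{w_e} \;=\; \sqrt{(p_e + \nicefrac{1}{m}) \cdot (p_e + \nicefrac{1}{m})^{-1}} \;=\; 1,
\]
so $\ell_e \ell_f \sqrt{w_e w_f} = 1$ for every pair $(e,f)$. Substituting into Lemma \ref{lem:Localization} yields the clean symmetric bound
\[
\sum_{e,f \in E} |b_e L^{\dagger} b_f^{\top}| \;\le\; 2 \alocal.
\]

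Finally, to recover the desired inequality I would use the pointwise estimate
\[
p_e w_e \;=\; \frac{p_e}{p_e + \nicefrac{1}{m}} \;\le\; 1.
\]
Unrolling the definition of load from Section 2 gives $\sum_e p_e \loadw(e) = \sum_{e,f} p_e w_e |b_e L^{\dagger} b_f^{\top}|$, so applying the pointwise bound edge-by-edge and then the symmetric localization bound above yields
\[
\sum_e p_e \loadw(e) \;\le\; \sum_{e,f} |b_e L^{\dagger} b_f^{\top}| \;\le\; 2\alocal,
\]
which is what we wanted. I do not expect a real obstacle here: the whole argument is essentially ``choose $\ell$ so that $\ell\sqrt{w}$ is the all-ones vector,'' and the only design choice that matters is the regularization term $\nicefrac{1}{m}$ inside $w_e$, which is precisely what makes $\|\ell\|_2^2$ bounded (without it, $\ell_e = \sqrt{p_e}$ would still work and give $\|\ell\|_2^2 = 1$, but the regularized form is needed later for the width bound in Lemma \ref{lem:WidthBound}). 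The mild subtlety worth flagging is checking that losing the factor $p_e w_e \le 1$ is harmless: it only costs a factor of $2$ in the final constant, matching the $2\alocal$ in the statement.
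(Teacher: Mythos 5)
Your proof is correct and is essentially the paper's own argument: your choice $\ell_e = \sqrt{p_e + \nicefrac{1}{m}}$ is exactly the paper's $\ell_e = \nicefrac{1}{\sqrt{w_e}}$, and both proofs combine the pointwise bound $p_e w_e \le 1$ with the localization lemma, the factor $2$ coming from $\|\ell\|_2^2 = \sum_e (p_e + \nicefrac{1}{m}) = 2$. (Only your closing aside misattributes the factor of $2$ to dropping $p_e w_e \le 1$ rather than to the $\nicefrac{1}{m}$ regularization in $\|\ell\|_2^2$, but this does not affect the proof.)
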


We use the second lemma to show that in each iteration of \ComputeRouting\, the true loads on each edge are $O(\sqrt{m})$ away from $\alocal$.

\begin{restatable}{lemma}{WidthBound}
\label{lem:WidthBound}
For any probability distribution $p \in \Delta^m$, the oblivious routing $M_w$ corresponding to the electrical network with weights $w_e = \weight$ satisfies $\loadw(e)  \le \sqrt{2m}$ for every edge $e$.
\end{restatable}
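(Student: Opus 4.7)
The plan is to start from the definition
\[
\load_w(e) \;=\; w_e \sum_{f \in E} |b_e L^{\dagger} b_f^{\top}|,
\]
and bound the sum by Cauchy--Schwarz, writing each term as $\tfrac{1}{\sqrt{w_f}} \cdot \sqrt{w_f}\, |b_e L^{\dagger} b_f^{\top}|$. This splits the right-hand side into two factors that I can analyze separately.

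For the first factor, I would use that by definition $w_f = (p_f + 1/m)^{-1}$, so
\[
\sum_f \frac{1}{w_f} \;=\; \sum_f \bigl(p_f + \tfrac{1}{m}\bigr) \;=\; 1 + 1 \;=\; 2,
\]
using $p \in \Delta^m$. For the second factor I would recognize the Laplacian identity
\[
\sum_f w_f \bigl(b_e L^{\dagger} b_f^{\top}\bigr)^2 \;=\; b_e L^{\dagger} B^{\top} W B\, L^{\dagger} b_e^{\top} \;=\; b_e L^{\dagger} L L^{\dagger} b_e^{\top} \;=\; b_e L^{\dagger} b_e^{\top},
\]
where the last equality uses that $b_e^{\top}$ lies in the image of $L$ (valid demand). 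Combined, Cauchy--Schwarz yields
\[
\sum_f |b_e L^{\dagger} b_f^{\top}| \;\le\; \sqrt{2} \cdot \sqrt{b_e L^{\dagger} b_e^{\top}}.
\]

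Next I would invoke the standard leverage score bound: $w_e \cdot b_e L^{\dagger} b_e^{\top}$ is the effective-resistance leverage score of $e$ and is therefore at most $1$, which rearranges to $b_e L^{\dagger} b_e^{\top} \le 1/w_e$. Plugging back in,
\[
\load_w(e) \;\le\; w_e \cdot \sqrt{2} \cdot \sqrt{1/w_e} \;=\; \sqrt{2\, w_e}.
\]
Finally, since $p_e \ge 0$ we have $w_e = (p_e + 1/m)^{-1} \le m$, and so $\load_w(e) \le \sqrt{2m}$, as required.

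The computation is essentially routine once one has the right decomposition. The only step that takes a moment of thought is recognizing that summing $w_f (b_e L^{\dagger} b_f^{\top})^2$ over $f$ collapses via $B^{\top} W B = L$ to the effective resistance $b_e L^{\dagger} b_e^{\top}$; the rest of the proof is bookkeeping. The key structural point is that the regularizer $1/m$ in the definition of $w_e$ caps the weights at $m$, which is exactly what produces the $\sqrt{m}$ width bound and ultimately the $\tilde O(\sqrt{m})$ iteration count of \ComputeRouting.
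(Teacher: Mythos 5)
Your proof is correct and is essentially the same argument as the paper's: the identical Cauchy--Schwarz split of $\sum_f |b_e L^{\dagger} b_f^{\top}|$ with weights $1/\sqrt{w_f}\cdot\sqrt{w_f}$, the bound $\sum_f (p_f + 1/m) = 2$, and the cap $w_e \le m$ from the $1/m$ regularizer. The only cosmetic difference is that you collapse the quadratic factor to the effective resistance $b_e L^{\dagger} b_e^{\top}$ and invoke the standard leverage-score bound $w_e\, b_e L^{\dagger} b_e^{\top} \le 1$, whereas the paper bounds $\sum_f w_e w_f |b_e L^{\dagger} b_f^{\top}|^2$ directly as a diagonal entry of the square of the projection matrix $\Pi = W^{1/2} B L^{\dagger} B^{\top} W^{1/2}$ --- these are the same fact.
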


The third lemma shows that the error introduced by using a matrix sketch in $\GetApproxLoad$ is not too large, i.e., that $\GetApproxLoad$ approximates the true loads well. For simplicity, we assume that the Laplacian solver used in $\GetApproxLoad$ is exact and runs in time $\tO(m)$. The analysis with an approximate solver is given in \cref{sec:apxlsolve}.

\begin{restatable}{lemma}{ApxLoad}
\label{lem:ApxLoad}
For any approximation factor $0 < \loade < 1$, and any weighted graph $(G,w)$, let $\loadw = \left( w_e \sum_f \aeLf \right)_{e \in E} $ be the true loads, and $\apxloadw = \GetApproxLoad(G, w, \loade)$ be the approximate loads computed by the algorithm. Then with probability $\ge 1 - \nicefrac{1}{\poly(n)}$,
\[
(1 - \loade) \cdot \loadw(e)  \le \apxloadw(e) \le (1 + \loade) \cdot \loadw(e)
\qquad \text{for all $e \in E$}
\]
\end{restatable}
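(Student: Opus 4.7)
The plan is to show that \GetApproxLoad\ is computing exactly $w_e \cdot \recovernorm(C v_e)$ for $v_e := B L\pinv b_e\tp$, and then invoke Theorem~\ref{thm:SketchApprox} edge by edge followed by a union bound.

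First I would unpack what the algorithm produces. Under the assumption of an exact Laplacian solver, line~\ref{line:xlapsolve} gives $U^{(i)} = L\pinv X^{(i)} = L\pinv B\tp (C\tp)_{:,i}$, so stacking columns yields $U = L\pinv B\tp C\tp$ and hence $U\tp = C B L\pinv$ (using symmetry of $L\pinv$). Therefore $U\tp b_e = C B L\pinv b_e\tp = C v_e$, and the algorithm returns $\apxloadw(e) = w_e \cdot \recovernorm(C v_e)$.

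Second, I would identify the true load with an $\ell_1$ norm. The $f$-th coordinate of $v_e$ is $b_f L\pinv b_e\tp = b_e L\pinv b_f\tp$ (by symmetry of $L\pinv$), so
\[
\|v_e\|_1 \;=\; \sum_f \aeLf \;=\; \loadw(e)/w_e.
\]

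Third, I would apply Theorem~\ref{thm:SketchApprox} with the parameters chosen in \GetApproxLoad\ (namely $\sketchd = n^{-10}$ and approximation parameter $\loade$). For each fixed edge $e$, the theorem guarantees that with probability at least $1 - n^{-10}$ over the random sketch $C$, the quantity $r_e := \recovernorm(C v_e)$ satisfies $(1-\loade)\|v_e\|_1 \le r_e \le (1+\loade)\|v_e\|_1$. Since a single draw of $C$ must work for all $m \le n^2$ vectors $\{v_e\}_{e \in E}$ simultaneously, I would take a union bound over the $m$ edges, concluding that with probability at least $1 - m \cdot n^{-10} \ge 1 - n^{-8}$ the approximation holds for every edge at once. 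Multiplying through by $w_e$ and using $\apxloadw(e) = w_e \cdot r_e$ and $\loadw(e) = w_e \|v_e\|_1$ then yields the desired two-sided bound on every edge.

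There is no real obstacle here beyond being careful with transposes in the identification $U\tp = CBL\pinv$; with an exact Laplacian solver the statement is essentially just Theorem~\ref{thm:SketchApprox} plus a union bound over edges. The genuine technical work, namely dealing with the fact that with an approximate solver the input to \recovernorm\ is $\apx{Cv_e}$ rather than $C v_e$ and so may lie outside the column space of $C$, is deferred to \cref{sec:apxlsolve} as the excerpt indicates.
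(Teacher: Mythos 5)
Your proposal is correct and takes essentially the same route as the paper's proof: with an exact Laplacian solver the vector fed to \recovernorm\ is exactly $CBL\pinv b_e$, so the two-sided bound follows from Theorem~\ref{thm:SketchApprox} applied to $v_e = BL\pinv b_e$ and multiplying by $w_e$. The only difference is that you spell out the union bound over the $m$ edges, which the paper leaves implicit.
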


We first track the potential function $\| \xv{t} \|_1$,  and upper bound the increase in $ \| \xv{t} \|_1$ through the algorithm.

\begin{restatable}{lemma}{xnormub}
\label{lem:xnormub}
For any $t \ge 1$, $\potential{t} \le \potential{t-1} \cdot \exp(2 \alocal /\rho)$. At the end of the algorithm, $\potential{T} \le m \cdot \exp(2\alocal T/\rho)$.
\end{restatable}
\begin{proof}
The implication for $\potential{T}$ follows from the former statement and noting that $\potential{0} = m$.
For any $t\ge 1$, we have
\begin{align*}
\potential{t}
&= \sum_e \xe{t}
= \sum_e \xe{t-1} \cdot \left( 1 + \frac{1}{2\rho} \cdot \alt(e) \right) \\
&= \sum_e \xe{t-1} + \frac{1}{2\rho} \cdot \sum_e \xe{t-1} \cdot \alt(e) \\
&\le \sum_e \xe{t-1} + \frac{\potential{t-1}}{\rho} \cdot \sum_e \frac{\xe{t-1}}{\potential{t-1}} \cdot \lt(e) \tag*{(by \cref{lem:ApxLoad})} \\
&\le \potential{t-1} + \frac{\potential{t-1}}{\rho} \cdot 2 \alocal \tag*{(by \cref{lem:OracleReturn})}\\
&\le \potential{t-1} \cdot \exp\left( \frac{2 \alocal}{\rho} \right) \tag*{\qedhere}
\end{align*}
\end{proof}

Next, we lower bound the weight $\xe{t}$ in terms of the load on each edge.

\begin{restatable}{lemma}{xelb}
\label{lem:xelb}
Let $M^*$ be the routing returned by the algorithm. For any edge $e$,
\[
    \xe{T} \ge \exp\left( \frac{T}{8\rho} \cdot \load_{M^*}(e) \right).
\]
\end{restatable}
\begin{proof}
For any edge $e$ and $t \ge 1$, we have
\begin{align*}
\xe{t}
&= \prod_{t' = 1}^t \left( 1 + \frac{1}{2\rho} \cdot \apxload_{w^{(t')}}(e) \right) \\
&\ge \prod_{t' = 1}^t \left( 1 + \frac{1}{4\rho} \cdot \load_{w^{(t')}}(e) \right) \tag*{(by \cref{lem:ApxLoad})} \\
&\ge \prod_{t' = 1}^t \exp\left( \frac{1}{8\rho} \cdot \load_{w^{(t')}}(e) \right) \tag*{(since $e^x \le 1 + 2x$ for $0 < x < 1$, and \cref{lem:WidthBound})} \\
&= \exp\left( \frac{1}{8\rho} \cdot \sum_{t' = 1}^t \load_{w^{(t')}}(e) \right)
\end{align*}
In particular, for $t = T$, we have
\begin{align*}
\xe{T}
&\ge \exp\left( \frac{1}{8\rho} \cdot \sum_{t' = 1}^T \load_{w^{(t')}}(e) \right) \\
&\ge \exp\left( \frac{T}{8\rho} \cdot \load_{M^*}(e) \right) \tag*{(by convexity of load) \qedhere}
\end{align*}
\end{proof}

\begin{restatable}{theorem}{CompRatio}
\label{thm:CompRatio}
The routing returned by Algorithm~\ref{alg:MWURoute} has competitive ratio $\compratio$.
\end{restatable}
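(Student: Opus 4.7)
The approach is to bound the load $\load_{M^*}(e)$ for every edge $e$, since the competitive ratio against the $\ell_\infty$ norm equals $\max_e \load_{M^*}(e)$. Convexity of $\load$ on oblivious routings, together with the definition $M^* = \frac{1}{T}\sum_{t=1}^T M^{(t)}$, gives $\load_{M^*}(e) \le \frac{1}{T}\sum_t \lt(e)$, and \cref{lem:ApxLoad} lets us swap true loads for approximate ones at a cost of a factor $(1-\loade)^{-1} = 2$. So it suffices to show $\frac{1}{T}\sum_t \alt(e) = O(\alocal)$ for every edge $e$.

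Now fix an edge $e^*$. Unrolling the multiplicative update yields $x_{e^*}^{(T)} = \prod_{t=1}^T (1 - \eta y_{e^*}^{(t)})$, and by \cref{lem:XXdecrease} this is at most $X^{(T)} \le m$. Since $\eta = 1/8$ and $|y_{e^*}^{(t)}| \le 1$ by \cref{lem:yebound}, the inequality $\ln(1-z) \ge -z - z^2$ applies at each $z = \eta y_{e^*}^{(t)}$; taking logs and summing gives
\[
-\sum_t y_{e^*}^{(t)} \le \frac{\ln m}{\eta} + \eta \sum_t (y_{e^*}^{(t)})^2.
\]

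The main obstacle — and the reason a textbook MWU argument falls short here — is the width-dependence of the last term: the naive bound $(y_{e^*}^{(t)})^2 \le 1$ would contribute an additive $\eta T \rho = \Theta(\sqrt{m})$ to the average load, orders of magnitude larger than $\alocal$. Instead I will use $(y_{e^*}^{(t)})^2 \le |y_{e^*}^{(t)}| = |\beta - \alt(e^*)|/\rho \le (\alt(e^*) + \beta)/\rho$, which folds the width back into the quantity we are trying to bound. Substituting, multiplying through by $\rho$, and recalling that $-\rho y_{e^*}^{(t)} = \alt(e^*) - \beta$ gives
\[
(1-\eta)\sum_t \alt(e^*) \le (1+\eta)\, T \beta + \frac{\rho \ln m}{\eta}.
\]
Dividing by $T(1-\eta)$ and plugging in the algorithm's choice $T = \rho \ln m/(\eta(1-\eta)\alocal)$ turns the second term into exactly $\alocal$, while the first equals $(1+\eta)\beta/(1-\eta) = O(\alocal)$ since $\beta = 2(1+\loade)\alocal$ and $\eta, \loade$ are absolute constants. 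Hence $\frac{1}{T}\sum_t \alt(e^*) = O(\alocal)$; chaining back through \cref{lem:ApxLoad} and convexity of load yields $\load_{M^*}(e^*) = O(\alocal)$ for every $e^*$, which is the claimed competitive ratio.
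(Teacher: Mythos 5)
Your proof is correct and follows essentially the same route as the paper: the same MWU potential argument via $\XX{T}\le m$ (\cref{lem:XXdecrease}), the inequality $\ln(1-x)\ge -x-x^2$ with $|\ye{t}|\le 1$ (\cref{lem:yebound}), the same trick of folding the quadratic term back into the load via $(\ye{t})^2\le|\ye{t}|$, and the same use of \cref{lem:ApxLoad} and convexity at the end. The only differences are bookkeeping — you bound $|\ye{t}|\le(\alt(e)+\newbeta)/\rho$ and keep approximate loads throughout, whereas the paper splits by the sign of $\ye{t}$ and converts to true loads inside the chain — which changes the final constant slightly but not the $\compratio$ conclusion.
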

\begin{proof}
Combining Lemmas~\ref{lem:xnormub} and \ref{lem:xelb}, for any edge $e$,
\begin{align*}
m \exp\left( \frac{2 \alocal T}{\rho} \right)
&\ge \potential{T}
\ge \xe{T}
\ge \exp\left( \frac{T}{8 \rho} \cdot \load_{M^*}(e) \right)
\end{align*} which in particular gives the required upper bound on $\load_{M^*}(e)$ for any edge $e$, since
\begin{align*}
\load_{M^*}(e)
&\le 16 \alocal + \frac{8 \rho \log m}{T} \\
&= O(\alocal) \tag*{\qedhere}
\end{align*}
\end{proof}

\subsection{Bound on Average Loads}
\label{sec:proofOracleReturn}

\OracleReturn*
\begin{proof}
Setting $\ell_e$ to $\nicefrac{1}{\sqrt{w_e}}$ and applying~\cref{lem:Localization}, we get
\begin{align*}
\sum_e p_e \loadw(e)
&= \sum_e p_e \sum_{f} \loadw(f\to e) \\
&= \sum_e p_e \cdot w_e \cdot \sum_{f} |b_e L\pinv b_f\tp| \tag*{(by definition of load)}\\
&= \sum_e p_e \cdot \weight \cdot \sum_{f} |b_e L\pinv b_f\tp| \tag*{(by definition of $w_e$)}\\
&\le \sum_e \sum_{f} |b_e L\pinv b_f\tp| \\
&= \sum_{e, f} \nicefrac{1}{\sqrt{w_e w_f}} \cdot \sqrt{w_ew_f} \cdot |b_e L\pinv b_f\tp | \\
&= \sum_{e, f} \ell_e \ell_f \cdot \sqrt{w_e w_f} \cdot |b_e L\pinv b_f\tp | \tag*{(by choice of $\ell_e$)}\\
&\le \alocal \cdot \|\ell\|_2^2 \tag*{(by \Cref{lem:Localization})} \\
&= \alocal \cdot \sum_e \nicefrac{1}{w_e} \\
&= \alocal \cdot \sum_e \left( p_e + \nicefrac{1}{m} \right) \\
&= 2 \alocal,
\end{align*} as required.
\end{proof}

\subsection{Bounding the Width}
\label{sec:proofWidthBound}
Next we show that the width is bounded above by $O(\sqrt{m})$.
We first prove a couple of properties of the $\Pi$ matrix that we will use in our analysis.

\begin{lemma} \label{lem:Pi_is_Projection}
The matrix $\Pi$ is a projection matrix.
\end{lemma}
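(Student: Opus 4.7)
The plan is to show $\Pi$ is an orthogonal projection by verifying the two defining properties: idempotence ($\Pi^2 = \Pi$) and symmetry ($\Pi\tp = \Pi$).

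For symmetry, I would just observe that $W^{1/2}$ is symmetric (it is a nonnegative diagonal matrix) and $L^{\dagger}$ is symmetric (as the pseudoinverse of the symmetric matrix $L = B\tp W B$), so taking the transpose of $\Pi = W^{1/2} B L^{\dagger} B\tp W^{1/2}$ reverses the factors and returns $\Pi$.

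For idempotence, the key identity is $W^{1/2} \cdot W^{1/2} = W$, which lets us collapse the middle product:
\begin{align*}
\Pi^2 &= \bigl( W^{1/2} B L^{\dagger} B\tp W^{1/2} \bigr) \bigl( W^{1/2} B L^{\dagger} B\tp W^{1/2} \bigr) \\
&= W^{1/2} B L^{\dagger} (B\tp W B) L^{\dagger} B\tp W^{1/2} \\
&= W^{1/2} B L^{\dagger} L L^{\dagger} B\tp W^{1/2} \\
&= W^{1/2} B L^{\dagger} B\tp W^{1/2} = \Pi,
\end{align*}
using the definition $L = B\tp W B$ and the pseudoinverse identity $L^{\dagger} L L^{\dagger} = L^{\dagger}$.

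There is no real obstacle here; the calculation is a routine manipulation once one exploits the two facts that $W^{1/2}$ squares to $W$ (turning the inner product into $L$) and that $L^{\dagger}$ is a genuine pseudoinverse. The only mild subtlety to flag is that we are using the symmetric/Moore--Penrose pseudoinverse of the symmetric PSD matrix $L$, which is what justifies both $L^{\dagger} = (L^{\dagger})\tp$ and $L L^{\dagger} L = L$ (equivalently $L^{\dagger} L L^{\dagger} = L^{\dagger}$); the paper's convention (stated just after the definition of $L$) that $L^{\dagger}$ refers to the pseudoinverse of the current weighted Laplacian is exactly what is needed.
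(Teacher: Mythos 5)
Your proof is correct and matches the paper's own argument: the idempotence computation $\Pi^2 = W^{1/2} B L\pinv (B\tp W B) L\pinv B\tp W^{1/2} = W^{1/2} B L\pinv B\tp W^{1/2}$ is exactly what the paper does. Your additional explicit check of symmetry of $\Pi$ is a harmless (and in fact useful, since the later diagonal-entry bound implicitly relies on it) supplement, not a different approach.
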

\begin{proof}
We show that $\Pi^2 = \Pi$.
\begin{align*}
\Pi^2
&= \left( W^{\nicefrac{1}{2}} B L\pinv B\tp W^{\nicefrac{1}{2}} \right) \cdot \left( W^{\nicefrac{1}{2}} B L\pinv B\tp W^{\nicefrac{1}{2}} \right) \\
&= \left( W^{\nicefrac{1}{2}} B \right) \cdot \left( L\pinv B\tp W B L\pinv \right) \cdot \left( B\tp W^{\nicefrac{1}{2}} \right) \\
&= \left( W^{\nicefrac{1}{2}} B \right) \cdot L\pinv \cdot \left( B\tp W^{\nicefrac{1}{2}} \right)%
= \Pi.
\end{align*} as required.
\end{proof}

We will use the next lemma to bound the width for both the $\ell_{\infty}$ and the $\ell_1$ case.

\begin{restatable}{lemma}{widthHelpfulLemma}
\label{lem:widthHelpfulLemma}
Let $G$ be a graph with weights $\{w_e\}$ and let $L$ be the Laplacian matrix associated with $G$. For any edge $e$, we have that
\[
    \sum_f w_e w_f |b_e L\pinv b_f \tp|^2 \leq 1.
\]
\end{restatable}
\begin{proof}
    By Lemma~\ref{lem:Pi_is_Projection} we know that $\Pi$ is a projection matrix. This in particular implies that the diagonal entries of $\Pi$ (and hence $\Pi^2$) are less then $1$. Thus,
\begin{align*}
\sum_f w_e w_f |b_e L\pinv b_f \tp|^2
&= \sum_f \left(\sqrt{w_e w_f} \cdot |b_e L\pinv b_f \tp|\right)^2 \\
&= \sum_f \Pi(e, f)^2
= \Pi^2(e,e)
\le 1,
\end{align*}
which was what we were after.
\end{proof}

\WidthBound*
\begin{proof}
Fixing an edge $e$,
\begin{align*}
\loadw(e)
&= w_e \sum_f |b_e L\pinv b_f\tp|
\le \sum_f \sqrt{\nicefrac{w_e}{w_f}} \cdot \sqrt{w_e w_f} |b_e L\pinv b_f\tp| \\
&\le \sqrt{\sum_f \nicefrac{w_e}{w_f}} \cdot \sqrt{\sum_f w_e w_f |b_e L\pinv b_f\tp|^2} \tag*{(by Cauchy-Schwarz)}
\end{align*}

We now bound each of the two terms separately. For the first term, note that
\begin{align*}
w_e \cdot \sum_f \nicefrac{1}{w_f}
= \weight \cdot \sum_f \left( p_f + \nicefrac{1}{m}\right)
\le 2 \cdot \weight
\le 2 \cdot \nicefrac{1}{\left(\nicefrac{1}{m}\right)} = 2m.
\end{align*}

For the second term, by Lemma~\ref{lem:widthHelpfulLemma}, we know that
\[
    \sum_f w_e w_f |b_e L\pinv b_f \tp|^2 \leq 1.
\]

Putting these two inequalities together, we get that $\loadw(e) \le \sqrt{2m}$ for any edge $e$, which gives the desired bound of $\sqrt{2m}$ on the width.
\end{proof}

\subsection{Proof of \GetApproxLoad\ correctness \assumeexact}
\label{sec:proofApxLoad}

We use the guarantees of \sketchmatrix\ and \recovernorm\ provided by \cref{thm:SketchApprox} to prove the following lemma.

\ApxLoad*
\begin{proof}
Note that $\GetApproxLoad$ sends $(L\pinv B\tp C\tp)\tp b_e$ to $\recovernorm$. This simplifies to $CBL\pinv b_e$. We get from the approximation guarantee of \cref{thm:SketchApprox} that
\[
\ome \cdot \| BL\pinv b_e \|_1 \le \recovernorm(BL\pinv b_e) \le \ope \cdot \| BL\pinv b_e \|_1
\]
Since $\loadw(e) = w_e \cdot \|BL\pinv b_e\|_1$, multiplying the above inequality by $w_e$, we get
\[
\ome \cdot \loadw(e) \le \GetApproxLoad(G, w) \le \ope \cdot \loadw(e)
\]
as required.
\end{proof}

\section{Running time analysis}
\label{sec:runtime}

In this section, we show that Algorithm~\ref{alg:MWURoute} (\ComputeRouting) runs in time $\tO(m^{3/2})$. If we show that each iteration of the for loop in \ComputeRouting\ takes time $\tO(m)$, then since $T= O(\sqrt{m})$, the claimed running time then follows.
We first show that Algorithm~\ref{alg:xLoadApx} (\GetApproxLoad) runs in time $\tO(m)$.

\begin{restatable}{lemma}{GetApxLoadTime}
\label{lem:GetApxLoadTime}
Algorithm~\ref{alg:xLoadApx} runs in time $\tO(m)$.
\end{restatable}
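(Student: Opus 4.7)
The plan is to walk through the algorithm line by line and verify that each operation runs in $\tO(m)$ time. Throughout, recall from the discussion after \cref{thm:SketchApprox} that we set $\lsolved = n^{-10}$ and $\loade = 1/2$, so $\ell = O(\log n)$ and the sketch matrix $C \in \R^{\ell \times m}$ has only $\tO(m)$ nonzero entries and is constructible in time $O(\ell m) = \tO(m)$.

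First, I would handle the preprocessing. Building the incidence matrix $B$ (with $2m$ nonzeros) and the Laplacian $L = B\tp W B$ (of size $n \times n$ but with only $O(m)$ nonzeros) takes $O(m)$ time. Forming $X = B\tp C\tp$ column by column takes $O(m)$ per column since $B$ has $O(m)$ nonzeros, for a total of $O(m \ell) = \tO(m)$; each resulting vector $X^{(i)}$ lies in $\R^n$.

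Second, I would address the Laplacian solves on \cref{line:xlapsolve}. By assumption in this section (\emph{\assumeexact}) each call to $\xlapsolve(L, X^{(i)})$ takes time $\tO(m)$, and there are $\ell = O(\log n)$ such calls, yielding $\tO(m)$ total. Assembling these into the matrix $U \in \R^{n \times \ell}$ is free.

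Finally, for each edge $e \in E$ we must compute $\recovernorm(U\tp b_e\tp)$. Since $b_e\tp$ is a $2$-sparse vector in $\R^n$, the matrix-vector product $U\tp b_e\tp \in \R^\ell$ is just the difference of two rows of $U\tp$ and takes $O(\ell)$ time. By \cref{thm:SketchApprox}, each call to $\recovernorm$ then takes $\poly(\ell) = \poly(\log n)$ time. Summed over all $m$ edges this contributes $m \cdot \poly(\log n) = \tO(m)$. Adding the preprocessing, sketch construction, Laplacian solves, and edge-by-edge norm recoveries gives a total runtime of $\tO(m)$, proving the lemma. There is no substantial obstacle here: the only thing to be careful about is to exploit the sparsity of $B$ (so that $B\tp C\tp$ is not computed as a dense $n \times m$ times $m \times \ell$ product) and the sparsity of $b_e\tp$ (so that each recovery step uses $O(\ell)$ rather than $O(n\ell)$ time).
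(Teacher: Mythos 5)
Your proof is correct and follows essentially the same route as the paper's: build $B$, $L$, and the sketch $C$ in $\tO(m)$, form $B\tp C\tp$ by exploiting sparsity in $\tO(m\ell)=\tO(m)$, run the $\ell=O(\log n)$ Laplacian solves at $\tO(m)$ each under the exact-solver assumption, and charge $\poly(\ell)=\poly(\log n)$ per edge for the $m$ calls to \recovernorm. The only (harmless) differences are bookkeeping: you assemble $B\tp C\tp$ column-by-column where the paper does it row-by-row, and you explicitly note that $U\tp b_e$ costs $O(\ell)$ because $b_e$ is $2$-sparse, which the paper leaves implicit.
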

\begin{proof}
Calculating $B$ and $L$ takes time $O(m)$. By \cref{thm:SketchApprox}, building the sketch matrix $C$ takes time $O(\ell m)$.
Since we set $\sketchd = n^{-10}$ and $\sketche = 1/2$ for the sketch matrix, we get $\ell = O(\log n)$, which gives $O(\ell m) = \tO(m)$.

Each row of $B\tp C\tp$ can be obtained as follows: Since the row of $B\tp$ corresponding to vertex $u$ has $\deg(u)$ non-zero entries, the row of $B\tp C\tp$ corresponding to vertex $u$ can be obtained by taking the sum of every row of $C\tp$ corresponding to an edge that is incident to $u$.
Since each row of $C\tp$ has $\ell$ entries, this involves $O(\deg(u) \cdot \ell)$ calculations for computing row $u$ of $B\tp C\tp$. Since $\sum_u \deg(u) = 2m$, this gives an overall bound of $\tO(m)$ for calculating $B\tp C\tp$.
Each Laplacian solver takes time $\tO(m)$, and since we solve for $\ell$ vectors, all Laplacian solves together take time $\tO(m)$ as well.

Finally, we perform \recovernorm\ for $m$ edges. Each invocation of \recovernorm\ takes time $O(\ell)$ by \cref{thm:SketchApprox}, and thus all calls to \recovernorm\ together run in time $\tO(m)$. This proves the lemma.
\end{proof}

We can use this to show that each iteration of the for loop in $\ComputeRouting$ runs in time $\tO(m)$, and thus the entire algorithm runs in time $\tO(m^{3/2})$.

\begin{restatable}{lemma}{ComputeRoutingTime}
\label{lem:ComputeRoutingTime}
Algorithm~\ref{alg:MWURoute} runs in time $\tO(m^{3/2})$.
\end{restatable}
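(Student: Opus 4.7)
The plan is to argue that each iteration of the for loop in \ComputeRouting\ runs in time $\tO(m)$, and then combine this with the bound $T = O(\sqrt{m} \log m) = \tO(\sqrt m)$ on the number of iterations to conclude $\tO(m^{3/2})$ total running time.

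First I would go through the loop line by line. Updating $\pe{t}$ and $\we{t}$ requires only a constant-time computation per edge, so both steps cost $O(m)$; forming the diagonal matrix $\WW{t}$ is likewise $O(m)$. The subtle step is line~\ref{line:Mdef}, which writes $M^{(t)} \leftarrow W^{(t)} B (B\tp W^{(t)} B)\pinv$: if one were to materialize this $m \times n$ matrix (let alone the pseudoinverse), the step alone would dominate the running time. The key observation I would emphasize is that we never need $M^{(t)}$ as an explicit matrix during the loop. The matrix is consumed only by $\GetApproxLoad$, which (inspecting Algorithm~\ref{alg:xLoadApx}) takes the graph $G$ together with the edge weights $\{w_e^{(t)}\}$ as input, not $M^{(t)}$ itself. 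Similarly, the output $M^* = \frac{1}{T} \sum_{t=1}^T M^{(t)}$ can be represented implicitly by the list of weight vectors $\{w^{(t)}\}_{t=1}^{T}$, each of size $m$, giving total output size $\tO(m^{3/2})$.

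Next I would invoke \cref{lem:GetApxLoadTime} to bound the call to \GetApproxLoad\ by $\tO(m)$. Given the approximate loads, updating each $\xe{t}$ and computing $\XX{t} = \sum_e \xe{t}$ each take $O(m)$ time. Hence every iteration of the for loop takes time $\tO(m)$.

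Finally, I would combine this per-iteration bound with $T = \frac{\rho \ln m}{\eta(1-\eta)\alocal} = O(\sqrt m \cdot \log m / \alocal) = \tO(\sqrt m)$ (using $\rho = O(\sqrt m)$ from the width bound in \cref{lem:WidthBound} and the constants $\eta = 1/8$, $\alocal = \Omega(1)$), for a total running time of $T \cdot \tO(m) = \tO(m^{3/2})$. The main obstacle in presentation is articulating clearly that the algorithm is purely implicit in $M^{(t)}$ — one stores weights throughout and never forms the routing matrix explicitly during iterations; this is what keeps each iteration linear and the whole computation at $\tO(m^{3/2})$.
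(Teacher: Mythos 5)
Your proposal is correct and follows essentially the same argument as the paper: each loop iteration costs $\tO(m)$ (dominated by the call bounded in \cref{lem:GetApxLoadTime}), and multiplying by $T = \tO(\sqrt{m})$ gives $\tO(m^{3/2})$. Your additional remark that $M^{(t)}$ must be kept implicit (represented by the weight vectors $w^{(t)}$ rather than materialized) is a sensible clarification the paper leaves tacit, deferring the cost of an explicit routing representation to Section~\ref{sec:repPRAM}.
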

\begin{proof}
For each iteration of the for loop, normalizing the $\xe{t-1}$s and calculating $\we{t}$ (and thus $\WW{t}$) takes time $O(m)$. Calculating approximate loads is $\tO(m)$ by \cref{lem:GetApxLoadTime}. Since computing $\xe{t}$ takes time $O(m)$, each iteration runs in time $\tO(m)$. Thus the algorithm runs in time $\tO(Tm) = \tO(m^{3/2})$ by choice of $T = O(\sqrt{m} )$.
\end{proof}

\section{\texorpdfstring{Extension to the $\ell_1$ norm Oblivious Routing}{Extension to the l\_1 norm Oblivious Routing}}
\label{sec:l1}

In this section we prove that a convex combination of $O(\sqrt{m})$ electrical routings gives an oblivious routing scheme with respect to the $\ell_1$ norm that achieves a $O(\log^2 n)$ competitive ratio.
By the characterization of the competitive ratio as $\beta_1 = \|MB\tp\|_1$, our goal is now to bound the maximum \emph{stretch} of an edge.

Consider the oblivious routing operator $M_w = WBL\pinv$. The \emph{stretch} of an edge $e \in E$ with respect to the routing operator $M_w$ is given by:
\[
    \stre_w(e) := \sum_{f} w_f |b_e L\pinv b_f\tp |.
\]

Similar to earlier, our goal is to solve the following linear program.

\begin{equation*}\label{primal}%
\begin{array}{@{}ll@{}rl}
 \text{min}  &  \alpha &  &   \\[0.2cm]
 \text{s.t.}  & \forall e \quad  \quad &   \sum_{i} \nolimits \lambda_i \cdot \stre_{w_i}(e) &\leq \alpha \\[0.3cm]
  & &  \sum_{i} \nolimits \lambda_i & = 1  \\[0.3cm]
  & \forall i \quad \quad &  \lambda_i & \geq 0.
\end{array}
\end{equation*}

Doing the same dual construction as for the load gives us that the dual is equivalent to solving
\begin{equation}
\label{eq:strebetastar}
\max_{p \in \Delta^m} \min_i \sum_e p_e \stre_{w_i}(e),
\end{equation}

Then the only difference from earlier is having to show that we can bound the average stretch and the width for stretch as well.
In particular, we first show that for every $p\in \Delta^m$ we can produce a weighting $w$ such that electrical routing on $G$ with weights $w$ gives low average stretch, i.e., $\sum_e p_e \stre_w \le \alocal$.

\begin{restatable}{lemma}{StretchAverage}
\label{lem:strechOnAverage}
For any probability distribution $p \in \Delta^m$, the oblivious routing $M_w$ corresponding to the electrical network with weights $w_e = (p_e +  \nicefrac{1}{m})$ satisfies $\sum_{e} p_e \stre_w(e) \leq 2\alocal$.
\end{restatable}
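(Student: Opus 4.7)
The plan is to mirror the proof of \cref{lem:OracleReturn}, but with a different choice of the auxiliary vector $\ell$ in localization, reflecting the fact that stretch weights each term by $w_f$ (rather than by $w_e$ as in the load case). First I would unfold the definition of stretch and swap the order of summation to write
\[
    \sum_e p_e \stre_w(e) \;=\; \sum_{e,f} p_e \cdot w_f \cdot |b_e L\pinv b_f\tp|.
\]
Since the chosen weighting is $w_e = p_e + \nicefrac{1}{m}$, we have $p_e \le w_e$ for every $e$, so the double sum is upper bounded by $\sum_{e,f} w_e w_f \, |b_e L\pinv b_f\tp|$. This is the step that uses crucially the choice of weighting, and it is directly analogous to the bound $p_e w_e \le 1$ used in \cref{lem:OracleReturn}, except that now the inequality goes in the opposite direction and the ``absorption'' is into $w_e$ rather than into $1/w_e$.

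Next, to invoke \cref{lem:Localization}, I would factor $w_e w_f = \sqrt{w_e}\cdot\sqrt{w_f}\cdot\sqrt{w_e w_f}$ and set $\ell_e := \sqrt{w_e}$. The double sum then takes exactly the form $\sum_{e,f} \ell_e \ell_f \sqrt{w_e w_f}\, |b_e L\pinv b_f\tp|$ appearing in the localization lemma, and is therefore bounded by $\alocal \cdot \|\ell\|_2^2 = \alocal \sum_e w_e$. The normalization of $p$ closes the argument, since $\sum_e w_e = \sum_e (p_e + \nicefrac{1}{m}) = 1 + 1 = 2$, yielding the claimed bound of $2\alocal$.

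I do not expect any genuine obstacle here: the proof is structurally identical to that of \cref{lem:OracleReturn}, and it reuses the same trick of letting the additive $\nicefrac{1}{m}$ in $w_e$ contribute a unit mass while the $p_e$ part contributes the other unit. The one subtlety worth flagging is the \emph{inverted} choice of $\ell$: because the stretch places the weight on $w_f$ rather than on $w_e$, the natural substitution here is $\ell_e = \sqrt{w_e}$ rather than $\ell_e = 1/\sqrt{w_e}$, and the choice $w_e = p_e + \nicefrac{1}{m}$ (not its reciprocal, as in the load case) is precisely what makes $\|\ell\|_2^2$ come out bounded by $2$.
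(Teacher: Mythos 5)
Your proposal is correct and follows exactly the paper's own argument: bound $p_e \le w_e = p_e + \nicefrac{1}{m}$, set $\ell_e = \sqrt{w_e}$, invoke \cref{lem:Localization}, and conclude via $\|\ell\|_2^2 = \sum_e (p_e + \nicefrac{1}{m}) = 2$. The subtlety you flag about the inverted choice of $\ell_e$ relative to \cref{lem:OracleReturn} is precisely how the paper handles it as well.
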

\begin{proof}
Setting $\ell_e$ to $\sqrt{w_e}$ and applying~\cref{lem:Localization}, we get
\begin{align*}
\sum_e p_e \stre_w(e)
&= \sum_e p_e \cdot \sum_{f} w_f \cdot \stre_w(e\to f) \\
&= \sum_e p_e \cdot \sum_{f}  w_f \cdot |b_e L\pinv b_f\tp| \tag*{(by definition of stretch)}\\
&\leq \sum_e (p_e + \nicefrac{1}{m}) \cdot \sum_f w_f \cdot |b_e L\pinv b_f\tp| \\
&\leq \sum_e w_e \cdot \sum_{f} w_f \cdot |b_e L\pinv b_f\tp| \tag*{(by definition of $w_e$)}\\
&= \sum_e \sum_{f} w_e \cdot w_f \cdot |b_e L\pinv b_f\tp| \\
&= \sum_{e, f} \sqrt{w_e w_f} \cdot \sqrt{w_e w_f} \cdot |b_e L\pinv b_f\tp | \\
&= \sum_{e, f} \ell_e \ell_f \cdot \sqrt{w_e w_f} \cdot |b_e L\pinv b_f\tp | \tag*{(by choice of $\ell_e$)}\\
&\le \alocal \cdot \|\ell\|_2^2 \tag*{(by \Cref{lem:Localization})} \\
&= \alocal \cdot \sum_e {w_e} \\
&= \alocal \cdot \sum_e \left( p_e + \nicefrac{1}{m} \right) \\
&= 2 \alocal,
\end{align*} as required.
\end{proof}

We next bound the width by
showing an upper bound on the stretch of every single edge.

\begin{restatable}{lemma}{WidthBoundL1}
\label{lem:WidthBoundL1}
For any probability distribution $p \in \Delta^m$, the oblivious routing $M_w$ corresponding to the electrical network with weights $w_e = (p_e + \nicefrac{1}{m})$ satisfies $\stre_w(e) \le \sqrt{2m}$ for every edge $e$.
\end{restatable}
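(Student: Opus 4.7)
The plan is to follow the same template as the width bound for loads in Lemma~\ref{lem:WidthBound}, but with the roles of $w_e$ and $w_f$ swapped inside the Cauchy-Schwarz step. Concretely, I will rewrite each summand of $\stre_w(e) = \sum_f w_f |b_e L\pinv b_f\tp|$ by multiplying and dividing by $\sqrt{w_e w_f}$, yielding
\[
\stre_w(e) \;=\; \sum_f \sqrt{\frac{w_f}{w_e}} \cdot \sqrt{w_e w_f}\,|b_e L\pinv b_f\tp|,
\]
and then apply Cauchy-Schwarz to bound this by the product of $\sqrt{\sum_f w_f/w_e}$ and $\sqrt{\sum_f w_e w_f |b_e L\pinv b_f\tp|^2}$.

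The first factor is handled using only the definition of the new weights: since $w_e = p_e + \nicefrac{1}{m} \ge \nicefrac{1}{m}$ and $\sum_f w_f = \sum_f (p_f + \nicefrac{1}{m}) = 2$, we immediately get $\sum_f w_f / w_e \le 2 \cdot m = 2m$. The second factor is bounded by $1$ by invoking Lemma~\ref{lem:widthHelpfulLemma} verbatim, since that lemma is a purely algebraic property of the diagonal entries of the projection matrix $\Pi = W^{\nicefrac{1}{2}} B L\pinv B\tp W^{\nicefrac{1}{2}}$ and does not depend on the specific form of the weights. Multiplying the two estimates yields $\stre_w(e) \le \sqrt{2m \cdot 1} = \sqrt{2m}$, as required.

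I do not anticipate any genuine obstacle here. The quadratic form $\sqrt{w_e w_f}\,|b_e L\pinv b_f\tp|$ is symmetric in $e$ and $f$, and the only meaningful difference between the load and stretch proofs is which weight is kept outside the absolute value in the definition: for load we had $w_e \sum_f |\cdot|$, whereas now we have $\sum_f w_f |\cdot|$. Consequently, the Cauchy-Schwarz split places $w_e$ in the denominator of the first factor, so it is the lower bound $w_e \ge \nicefrac{1}{m}$ (rather than an upper bound on $w_e$) that drives the argument. The entire proof is thus a straightforward two-line consequence of Cauchy-Schwarz together with the projection identity $\Pi^2 = \Pi$.
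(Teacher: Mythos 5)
Your proof is correct and matches the paper's argument essentially line for line: the same multiply-and-divide by $\sqrt{w_e w_f}$, the same Cauchy--Schwarz split, the bound $\sum_f w_f/w_e \le 2m$ from $w_e \ge \nicefrac{1}{m}$ and $\sum_f w_f = 2$, and Lemma~\ref{lem:widthHelpfulLemma} for the second factor. No discrepancies to report.
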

\begin{proof}
Fixing an edge $e$,
\begin{align*}
\stre_w(e)
&=  \sum_f w_f |b_e L\pinv b_f\tp| \\
&\le \sum_f \sqrt{\nicefrac{w_f}{w_e}} \cdot \sqrt{w_e w_f} |b_e L\pinv b_f\tp| \\
&\le \sqrt{\sum_f \nicefrac{w_f}{w_e}} \cdot \sqrt{\sum_f w_e w_f |b_e L\pinv b_f\tp|^2} \tag*{(by Cauchy-Schwarz)}
\end{align*} We now bound each of the two terms separately. For the first term,
\begin{align*}
\nicefrac{1}{w_e} \cdot \sum_f w_f
&= \weight \cdot \sum_f \left( p_f + \nicefrac{1}{m}\right) \\
&\le 2 \cdot \weight \\
&\le 2 \cdot \nicefrac{1}{\left(\nicefrac{1}{m}\right)} \\
&= 2m.
\end{align*}

For the second term, by Lemma~\ref{lem:widthHelpfulLemma}, we know that
\[
    \sum_f w_e w_f |b_e L\pinv b_f \tp|^2 \leq 1.
\]

Putting these two inequalities together, we get that $\stre_w(e) \le \sqrt{2m}$ for any edge $e$, which gives the desired bound of $\sqrt{2m}$ on the width.
\end{proof}

With these two lemmas, the rest of the proofs of the bound on the competitive ratio are exactly the same as in the $\ell_{\infty}$ case.
While most of the running time analysis holds, note that we need to sketch slightly different matrices now. Earlier, we wanted to approximate $\loadw(e) = w_e \cdot \sum_f \aeLf$. Thus we approximated $\|BL\pinv b_e\|_1$ with our sketch matrix, and then multiplied it with $w_e$ to obtain approximate loads. Since we now need to approximate $\strew(e) = \sum_f w_f \aeLf$, we instead sketch $WBL\pinv b_e$ to obtain the approximate stretch.
Note that $\load$ and $\stre$ are simply the row and column 1-norms respectively of $WBL\pinv B\tp$.

\section{The Representation of Oblivious Routing and the Parallel Complexity}
\label{sec:repPRAM}

In this section, we discuss the representation of our oblivious routing scheme and the parallel complexity of our algorithms.

\paragraph*{Representation of Oblivious Routing.} Any \emph{linear} oblivious routing scheme can be implemented using the following representation: every edge $e \in E$  stores the flow sent across edge $e$ by an oblivious routing that sends one unit of flow
from $u$ to $x$, for every vertex $u \in V$ and some arbitrary but fixed target vertex $x \in V$. We let $f_{u,x}(e)$ denote the value of such an oblivious flow. Thus every edge stores $n$ values and the total space requirement is $O(nm)$.  Upon receiving a query for routing demand pairs ${\dem_{s,t}}$ of some demand vector $\dem \in \mathbb{R}^{\binom{n}{2}}$, we can compute the $(s,t)$-flow along any edge $e \in E$ by computing
\[
      \dem_{s,t} \cdot (f_{s,x}(e) - f_{t,x}(e)),
\]
where the correctness follows from the fact that the oblivious routing operator is linear. %

We need to show that our oblivious routing operator based on a convex combination of electrical routings is linear. Note that $M_w := WBL \pinv$ is a linear routing operator for any weighting on the edges $\{w_e\}$. Therefore, any convex combination $\sum_{i} \lambda_i M_{w_i}$ with $\sum_{i} \lambda_i = 1$ is also a linear routing operator.

It remains to study the running time of constructing such a representation, which we refer to as the \emph{preprocessing time}. We start by considering the cost of constructing the representation for a single electrical routing $M_w := WBL \pinv$. Since $f_{u,x}(e) = (WBL\pinv \chi_{u,x})_e$, our goal is to compute $WBL\pinv \chi_{u,x}$, for every $u \in V$ and the fixed vertex $x$, which can be achieved by solving $n$ Laplacian systems. By Theorem~\ref{thm:LapSolve}, each such system can be solved in $\tilde{O}(m)$ time, which in turn leads to a processing time of $\tilde{O}(m n)$ for a single electrical routing. Since our oblivious routing scheme consists of $O(\sqrt{m})$ electrical routings, it follows that the total preprocessing time for constructing the representation for these electrical routings is $\tilde{O}(m^{3/2}n)$.

\paragraph*{Parallel Complexity.}

We next bound the parallel complexity of (i) the multiplicative weights updates algorithm for computing the weights of our oblivious routing scheme (Algorithm~\ref{alg:MWURoute}) and (ii) the algorithm for computing the representations of our scheme.
Both results rely on the fact that a Laplacian system can be solved to high accuracy with nearly-linear work and polylogarithmic depth.

\begin{theorem}[\cite{PengS14,KyngLPSS16}]
\label{thm:parallelLaplacianSolver}
Given a $n$-vertex $m$-edge graph $G$, a Laplacian matrix $L$, a demand vector $y \in \mathbb{R}^{n}$ and an error bound $\epsilon_L$, there is a parallel algorithm that achieves $\tilde{O}(m)$ work and $\tilde{O}(1)$ depth and returns a vector $x \in \mathbb{R}^{n}$ such that
\[
    \|x - L\pinv y\|_L \le \lsolvee \cdot \|L\pinv y\|_L.
\]
\end{theorem}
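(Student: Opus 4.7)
The statement is attributed to Peng and Spielman \cite{PengS14} and Kyng, Lee, Peng, Sachdeva, and Sachdeva \cite{KyngLPSS16}, so the ``proof'' is really an invocation of existing machinery rather than a new argument. The plan is to quote one of these constructions directly and verify that its stated bounds match the work/depth/error parameters of the theorem. I would cite Peng--Spielman as the first parallel SDD/Laplacian solver achieving nearly-linear work and polylogarithmic depth, and note that Kyng et al.\ give an alternative construction via approximate Gaussian elimination that also meets these bounds.

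At a high level, I would describe the ingredients that make parallelization possible. First, a Laplacian $L$ admits a \emph{spectral sparsifier} $\tilde L$ with $O(n\, \mathrm{polylog}\, n)$ edges that can be constructed with nearly-linear work and polylogarithmic depth; this follows from parallelizable sampling by effective resistances combined with repeated squaring. Second, repeated squaring of a random-walk matrix combined with sparsification after each squaring yields, in polylogarithmic depth, a low-diameter chain of sparsifiers whose product approximates $L^{\dagger}$. Third, an approximate solver obtained from one pass of this chain can be boosted to accuracy $\lsolvee$ by an outer preconditioned Chebyshev iteration whose iteration count is $O(\log(1/\lsolvee))$; each iteration requires only sparse matrix--vector products and thus contributes only polylogarithmic depth and nearly-linear work.

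Putting these pieces together, the algorithm applied to the input $y$ returns a vector $x$ satisfying the stated $L$-norm guarantee $\|x - L^{\dagger} y\|_L \le \lsolvee \cdot \|L^{\dagger} y\|_L$. The total work is $\tilde O(m)$ (dominated by sparsifier construction and the sparse matrix--vector products) and the total depth is $\tilde O(1)$ (each of the polylogarithmically many stages has polylogarithmic depth, and the outer Chebyshev iteration adds only an $O(\log(1/\lsolvee))$ factor which is swallowed by $\tilde O(\cdot)$).

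The main (non-)obstacle is just bookkeeping: confirming that the work and depth bounds as stated in \cite{PengS14} or \cite{KyngLPSS16} are phrased in the same model (PRAM, with our notion of work/depth) and that the accuracy parameter there is compatible with the $L$-norm guarantee we need. Once that is verified, the theorem follows by direct citation; no new argument is required.
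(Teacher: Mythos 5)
Your proposal matches the paper exactly: the theorem is stated purely by citation to \cite{PengS14,KyngLPSS16}, with no proof given, which is precisely the ``direct invocation of existing machinery'' you describe (your sketch of sparsification, squaring, and Chebyshev boosting is accurate background but not required). One small nit: the author list of \cite{KyngLPSS16} is Kyng, Lee, Peng, Sachdeva, and Spielman, not ``Sachdeva and Sachdeva.''
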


Our first result shows that our MWU-based oblivious routing can be implemented in parallel with $\tilde{O}(m^{3/2})$ work and $\tilde{O}(\sqrt{m})$ depth.

\begin{lemma}
\label{lem:parallelMWU}
There is a parallel implementation of Algorithm~\ref{alg:MWURoute} that achieves $\tilde{O}(m^{3/2})$ work and $\tilde{O}(\sqrt{m})$ depth.
\end{lemma}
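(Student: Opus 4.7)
The plan is to perform a work-depth audit of each iteration of the for-loop in Algorithm~\ref{alg:MWURoute}: I aim to show that every iteration admits a parallel implementation with $\tO(m)$ work and $\tO(1)$ depth. Since the iterations are inherently sequential (iteration $t$ reads state produced at iteration $t-1$ through the dual variables $\xe{t-1}$) and there are $T = O(\sqrt{m})$ of them, multiplying gives total work $\tO(m^{3/2})$ and total depth $\tO(\sqrt{m})$, as required.

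First, I would dispose of the cheap bookkeeping inside a single iteration. The updates $\pe{t} \gets \xe{t-1}/\XX{t-1}$, $\we{t} \gets \weight$, $\WW{t} \gets \diag(w)$, and $\xe{t} \gets \xe{t-1}[1 - (\eta/\rho)(\newbeta - \alt(e))]$ each consist of $m$ independent scalar operations and thus run in $O(m)$ work at $O(1)$ depth. The reduction $\XX{t} = \sum_e \xe{t}$ uses a parallel summation tree with $O(m)$ work and $O(\log m)$ depth. Crucially, the matrix $\MM{t}$ is never formed explicitly (that would cost $\Omega(mn)$): it suffices to maintain the weights $\we{t}$, since \GetApproxLoad\ consumes $w$ directly, and the final convex combination $M^*$ is represented implicitly by the sequence $\{\we{t}\}_{t=1}^T$ (its explicit materialization is discussed separately in Section~\ref{sec:repPRAM}).

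Second, and this is where the bulk of the work lives, I would show that \GetApproxLoad\ admits a parallel implementation with $\tO(m)$ work and $\tO(1)$ depth. The sketch matrix $C \in \R^{\ell \times m}$ with $\ell = O(\log n)$ has independent entries, so it can be sampled in $O(m\ell) = \tO(m)$ work and $O(1)$ depth. Forming $X = B\tp C\tp$ is done row by row in parallel: the row for vertex $u$ is the signed sum of the $\deg(u)$ rows of $C\tp$ indexed by edges incident to $u$, and all such sums can be carried out in $O(m\ell) = \tO(m)$ work and $O(\log n)$ depth via parallel prefix sums. The $\ell$ Laplacian solves on line~\ref{line:xlapsolve} are on independent right-hand sides and can be executed concurrently; each invokes Theorem~\ref{thm:parallelLaplacianSolver} at cost $\tO(m)$ work and $\tO(1)$ depth, yielding $\tO(m)$ aggregate work and $\tO(1)$ depth. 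Stacking these solutions into $U$ is free, and the $m$ independent calls $\recovernorm(U\tp b_e)$ each take $\poly(\ell) = \tO(1)$ work, contributing $\tO(m)$ work at depth $\tO(1)$ in total.

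The main (rather mild) obstacle is simply verifying that no hidden sequential dependency appears inside an iteration. The only nontrivial such check is that the $\ell$ Laplacian solves, although they share the same matrix $L = B\tp \WW{t} B$, have independent right-hand sides, so Theorem~\ref{thm:parallelLaplacianSolver} applies to each independently; and one should invoke the approximate-solver version of \GetApproxLoad\ from Section~\ref{sec:apxlsolve} (rather than the idealized exact-solver analysis of Section~\ref{sec:lsolvex}) to match the guarantee of Theorem~\ref{thm:parallelLaplacianSolver}. Assembling these observations yields the claimed $\tO(m^{3/2})$ work and $\tO(\sqrt{m})$ depth.
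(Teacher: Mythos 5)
Your proposal is correct and follows essentially the same route as the paper: bound one iteration by $\tO(m)$ work and $\tO(1)$ depth (dominated by \GetApproxLoad, with local sketch generation, row-wise computation of $B\tp C\tp$, and $\ell=O(\log n)$ concurrent Laplacian solves via Theorem~\ref{thm:parallelLaplacianSolver}), then multiply by the $T=O(\sqrt{m})$ inherently sequential iterations. Your extra remarks — that $\MM{t}$ is kept implicit via the weights and that the approximate-solver analysis of Section~\ref{sec:apxlsolve} is the one matching Theorem~\ref{thm:parallelLaplacianSolver} — are points the paper leaves implicit, and they are consistent with it.
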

\begin{proof}
We start by analyzing the parallel complexity of Algorithm~\ref{alg:MWURoute}. Consider one iteration of the \textbf{for} loop, and observe that the parallel complexity is dominated by the  parallel cost of computing approximate loads in Line 9, i.e., Algorithm~\ref{alg:xLoadApx}. Hence, it suffices to bound the parallel complexity of the latter. For generating the sketch matrix $C$, note that each edge $e \in E$ draws $\ell = O(\log n)$ independent random variables from the Cauchy distribution~\cite{Ind06sketch}, and since these operations can be performed locally, it follows that constructing $C$ takes $\tilde{O}(m \ell) = \tilde{O}(m)$ work and $O(1)$ depth.

Next, we compute the $(n \times \ell)$-dimensional matrix $X = B \tp C\tp$ in a row-wise fashion (Algorithm~\ref{alg:xLoadApx}, Line 4). The $u$-th row of $B \tp$ contains $\deg(u)$ non-zero entries, and thus an entry $(X)_{u,i} = (B \tp C\tp)_{u,i} = \sum_{e \mid e \sim u} b_{u,e}\tp c_{e,i}\tp$ can be evaluated with $O(\deg(u))$ work and $O(\log n)$ depth. As $X$ has only $\ell = O(\log n)$ columns, it follows that $u$-th row can be computed with $\tilde{O}(\deg(u))$ work and $O(\log n)$ depth. Summing the costs over all rows of $X$, we conclude that $X$ can be computed with $\tilde{O}\left(\sum_{u} \deg(u)\right) = \tilde{O}(m)$ work and $O(\log n)$ depth. Finally, Lines 5 and 6 of Algorithm~\ref{alg:xLoadApx} involve solving $\ell=O(\log n)$ Laplacian systems. By Theorem~\ref{thm:parallelLaplacianSolver}, we can solve all these systems in parallel with $\tilde{O}(m)$ work and $\tilde{O}(1)$ depth.

Bringing all the above bounds together shows that an iteration of the \textbf{for} loop in Algorithm~\ref{alg:MWURoute} can be implemented in parallel with $\tilde{O}(m)$ work and $\tilde{O}(1)$ depth. Since in total there are $\tilde{O}(\sqrt{m})$ iterations, we get that a parallel implementation of Algorithm~\ref{alg:MWURoute} has $\tilde{O}(m^{3/2})$ work and $\tilde{O}(\sqrt{m})$ depth.
\end{proof}

Our second result show that the representation of our oblivious routing can be implemented in parallel with $\tilde{O}(m^{3/2}n)$ work and $\tilde{O}(1)$ depth.

\begin{lemma}
The representation of the oblivious routing based on $O(\sqrt{m})$ electrical routings can be implemented in parallel with $\tilde{O}(m^{3/2}n)$ work and $\tilde{O}(1)$ depth.
\end{lemma}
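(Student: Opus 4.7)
The plan is to observe that the representation decouples completely across the $O(\sqrt{m})$ electrical routings and across the $n$ source vertices, so we can simply instantiate the parallel Laplacian solver of Theorem~\ref{thm:parallelLaplacianSolver} in parallel on every (routing, source) pair. Fix the target vertex $x \in V$ once and for all. For each weight vector $w_i$ produced by Algorithm~\ref{alg:MWURoute}, let $L_i = B\tp W_i B$ be the associated Laplacian and $M_{w_i} = W_i B L_i^{\dagger}$ be the corresponding electrical routing. Then the values $f_{u,x}^{(i)}(e)$ that need to be stored are exactly the entries of $W_i B L_i^{\dagger} \chi_{u,x}$, so computing them reduces to solving the Laplacian system $L_i \phi_{u,x}^{(i)} = \chi_{u,x}$ and then forming $W_i B \phi_{u,x}^{(i)}$.

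First, I would spawn, for each $i \in [O(\sqrt{m})]$ and each $u \in V$, a parallel task that invokes the solver of Theorem~\ref{thm:parallelLaplacianSolver} on $(L_i, \chi_{u,x})$. Each such invocation uses $\tilde{O}(m)$ work and $\tilde{O}(1)$ depth, and since we run $O(\sqrt{m}) \cdot n$ of them fully in parallel, the aggregate cost of this step is $\tilde{O}(m^{3/2} n)$ work and $\tilde{O}(1)$ depth. Second, I would, still in parallel for every triple $(i, u, e)$, read the two endpoint entries of $\phi_{u,x}^{(i)}$, subtract them, and scale by $(w_i)_e$ to obtain $f_{u,x}^{(i)}(e)$; this is $O(1)$ work per triple and $O(1)$ depth, contributing $O(m^{3/2} n)$ work in total. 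Storing each entry at the appropriate edge is likewise $O(1)$ work and $O(1)$ depth per entry.

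The main subtlety, such as it is, concerns the approximate nature of the Laplacian solver: Theorem~\ref{thm:parallelLaplacianSolver} returns a vector $\tilde{\phi}_{u,x}^{(i)}$ satisfying $\|\tilde{\phi}_{u,x}^{(i)} - L_i^{\dagger}\chi_{u,x}\|_{L_i} \le \lsolvee \cdot \|L_i^{\dagger}\chi_{u,x}\|_{L_i}$. By choosing $\lsolvee = 1/\poly(n)$ (which affects work and depth only by polylogarithmic factors), the induced error on each $f_{u,x}^{(i)}(e)$ is polynomially small and therefore harmless for the competitive-ratio guarantee, analogously to how approximate solves are handled inside \GetApproxLoad\ in Section~\ref{sec:apxlsolve}. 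Summing the contributions of all three steps yields the claimed $\tilde{O}(m^{3/2} n)$ work and $\tilde{O}(1)$ depth, completing the proof.
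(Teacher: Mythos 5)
Your proposal is correct and follows essentially the same route as the paper: parallelize the $O(\sqrt{m}) \cdot n$ independent Laplacian solves (one per weighting and per source vertex) using Theorem~\ref{thm:parallelLaplacianSolver}, giving $\tilde{O}(m^{3/2}n)$ work and $\tilde{O}(1)$ depth. The only cosmetic difference is that the paper additionally notes the per-edge averaging of the $O(\sqrt{m})$ flows into the convex combination ($\tilde{O}(m^{3/2})$ work, $\tilde{O}(1)$ depth), a lower-order cost that your argument implicitly covers, while you add a short (harmless and reasonable) remark about the solver's approximation error that the paper omits here.
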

\begin{proof}
We first analyze the cost of a single electrical routing given a weighting of the edges. Recall that our representation requires that every vertex solves one Laplacian system. These systems can be solved independently of each other (and thus in parallel), and by Theorem~\ref{thm:parallelLaplacianSolver}, each of them can be implemented in parallel with $\tilde{O}(m)$ work and $\tilde{O}(1)$ depth. Thus, the parallel complexity of a single electrical routing is $\tilde{O}(mn)$ work and $\tilde{O}(1)$ depth.

Now, note that our MWU algorithm has already computed $\tilde{O}(\sqrt{m})$ weightings of the graph, each corresponding to a single electrical routing. Once computed, these weightings are independent of each other and, thus, the electrical routings (one per vertex) for all of these weightings can be computed in parallel. Therefore, the total complexity for computing the representation of our routing scheme is $\tilde{O}(m^{3/2}n)$ work and $\tilde{O}(1)$ depth.

To evaluate the average (oblivious) flow from the convex combination of $\tilde{O}(\sqrt{m})$ electrical routings, each edge can locally compute the convex combination of the oblivious flows sent along that edge  with $\tilde{O}(\sqrt{m})$ work and $O(\log(\sqrt{m})) = \tilde{O}(1)$ depth. Thus, it follows that the total parallel complexity of this step is $\tilde{O}(m^{3/2})$ work and $\tilde{O}(1)$ depth.

Bringing the above bounds together proves the lemma.
\end{proof}

\paragraph*{Funding}

\begin{wrapfigure}{r}{0.15\textwidth}
\includegraphics[width=0.13\textwidth]{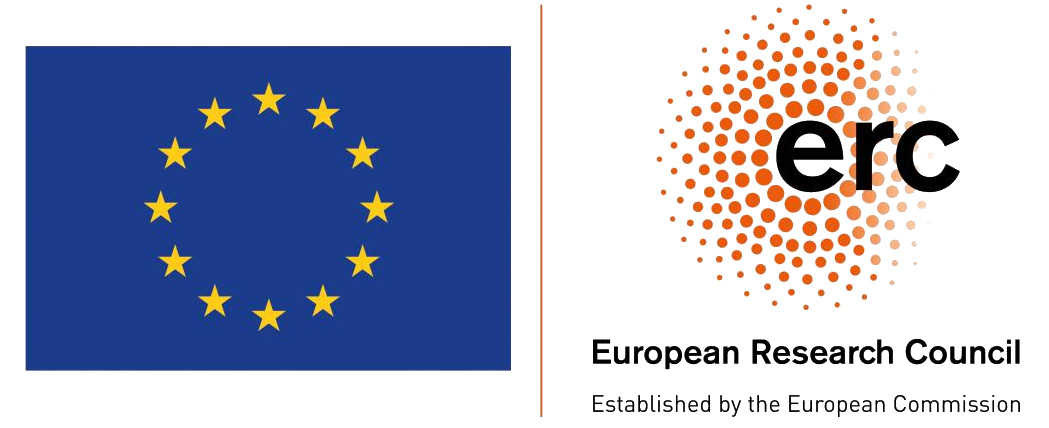}
\end{wrapfigure}

\textit{M. Henzinger and A. R. Sricharan}:
This project has received funding from the European Research Council (ERC) under the European Union's Horizon 2020 research and innovation programme (Grant agreement No. 101019564) and the Austrian Science Fund (FWF) project Z 422-N, project
I 5982-N, and project P 33775-N, with additional funding
 from the \textit{netidee SCIENCE
Stiftung}, 2020--2024.

\textit{S. Sachdeva}:
SS's work is supported by an Natural Sciences and Engineering Research Council of Canada (NSERC) Discovery Grant RGPIN-2018-06398 and a Sloan Research Fellowship.

\textit{H. R\"acke}:
Research supported by German Research Foundation (DFG), grant 470029389 (FlexNets), 2021-2024.
\bibliographystyle{beta}
\def\bibfont{\small}
\bibliography{Ref, references}

\appendix

\section{Algorithm using approximate Laplacian solvers}
\label{sec:apxlsolve}

\newcommand{\allones}{\ensuremath{\mathds{1}}}
\newcommand{\med}{\ensuremath{m}}
\newcommand{\epschoice}{\ensuremath{\nicefrac{\epsilon}{( 8m n^4 K)}}}
\newcommand{\adde}{\ensuremath{\nicefrac{\epsilon}{2 mn}}}
\newcommand{\addeload}{\ensuremath{\nicefrac{2}{m}}}
\newcommand{\ubound}{\ensuremath{4\loade n^3 K}}

We give in Algorithm~\ref{alg:LoadApx} a version of $\GetApproxLoad$ that uses an approximate Laplacian solver~\cite{ST04} instead of an exact one. The change is in Line~\ref{line:apxlapsolve}. For concreteness, we use the following approximate Laplacian solver by Jambulapati and Sidford~\cite{JS21laplacian}.

\begin{algorithm}[t]
\SetAlgoLined
\DontPrintSemicolon
\caption{\GetApproxLoad, to compute approximate loads for electrical routing.}
\label{alg:LoadApx}
\KwInput{A graph $G$, weights $\{ w_e \}_{e \in E}$ on the edges, approximation factor $\loade$.}
\KwOutput{Approximation $\{ \apxloadw \}_{e \in E}$ to the load on the edges.}
Let $B$ be the incidence edge-vertex matrix of $G$\;
Let $L := B\tp \diag(w) B$ be the Laplacian matrix\;
Set $C \leftarrow \sketchmatrix(m, n^{-10}, \nicefrac{\loade}{2})$\;
Set $X \leftarrow B\tp C\tp$\;
Let $X^{(i)}$ be the $i^{th}$ column of $X$ for all $i \in [\ell]$\;
Set $\tU^{(i)} \leftarrow \lapsolve(L, X^{(i)}, n^{-10}, \lsolvee)$ for all $i \in [\ell]$\label{line:apxlapsolve}\;
Set $\tU \leftarrow (\tU^{(1)}, \tU^{(2)}, \ldots, \tU^{(\ell)})$ \algcomment{$\tU \approx (CBL\pinv)\tp$}
Set $\apxloadw(e) \leftarrow w_e \cdot \recovernorm(\tU\tp b_e)$ for all $e \in E$\;
\Return{$\apxloadw$}
\end{algorithm}

\begin{restatable}[\cite{JS21laplacian}]{theorem}{LapSolve}
\label{thm:LapSolve}
There is an algorithm $\lapsolve$ that gets as input a Laplacian $L$ of an $n$-vertex $m$-edge graph, a vector $y \in \R^n$, error bound $\lsolvee$, and returns a vector $x \in \R^n$ such that the following holds with probability $\ge 1 - \nicefrac{1}{\poly(n)}$.
\[
\|x - L\pinv y\|_L \le \lsolvee \cdot \|L\pinv y\|_L
\] where $\|x\|_L = \sqrt{x\tp L x} $ is the norm induced by the Laplacian. Further, the algorithm runs in time $\tO(m \log(1/\lsolvee))$.
\end{restatable}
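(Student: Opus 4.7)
Since Theorem~\ref{thm:LapSolve} is imported verbatim from Jambulapati--Sidford~\cite{JS21laplacian}, my plan is to sketch the standard preconditioning framework that such nearly-linear-time Laplacian solvers are built on, rather than reproduce their specific ultrasparsifier construction. The overall strategy is to combine a preconditioned iterative method in the outer loop with a recursively-solvable approximation $\tilde L$ to $L$ inside. First I would construct $\tilde L$ as a spectral sparsifier of $L$ satisfying $\tilde L \preceq L \preceq \kappa \tilde L$ for some small $\kappa$, with the additional structural property that $\tilde L$ is (the Laplacian of) a low-stretch spanning tree plus a small number of off-tree edges (an \emph{ultrasparse ultrasparsifier} in their terminology). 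Systems in $\tilde L$ can then be solved recursively: iteratively eliminating degree-one and degree-two vertices (which is just partial Gaussian elimination along the tree) reduces the problem to a Laplacian on a graph whose vertex count is roughly the number of off-tree edges, to which we apply the solver recursively.

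For the outer loop I would run preconditioned Chebyshev iteration with $\tilde L$ as the preconditioner. The spectrum of $\tilde L^{\,\pinv/2} L \tilde L^{\,\pinv/2}$ restricted to $\mathrm{range}(L)$ lies in $[1, \kappa]$, so Chebyshev iteration contracts the error in the $L$-norm by a factor of $1 - \Omega(1/\sqrt{\kappa})$ per step. Starting from $x^{(0)} = 0$, the initial $L$-norm error is exactly $\|L\pinv y\|_L$, so $O(\sqrt{\kappa}\log(1/\lsolvee))$ outer iterations suffice to reach the claimed relative $L$-norm error. Each outer iteration costs one sparse multiplication by $L$ (which is $O(m)$) plus one invocation of the preconditioner; combining the per-iteration cost with the recursion depth via an amortized analysis in the style of~\cite{ST04, PengS14} gives total work $\tO(m \log(1/\lsolvee))$, and the high-probability bound follows from taking a union bound over the randomized sparsifier constructions used at each recursive level and standard amplification.

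The main obstacle, and precisely the technical contribution of~\cite{JS21laplacian}, is ensuring that the product of (i) the effective condition number $\sqrt{\kappa}$, (ii) the per-level work, and (iii) the recursion depth together contribute only $\tO(1)$ factors beyond the $O(m\log(1/\lsolvee))$ target, rather than the $\log^{c} n$ factors appearing in earlier solvers such as~\cite{ST04, PengS14, KyngLPSS16}. Their key insight is that allowing the sparsifier to have as few as $n + o(n)$ edges (at the cost of a larger but controlled $\kappa$) lets the recursion shrink the graph geometrically in size, bottoming out after $O(\log n)$ levels with only polylogarithmic overhead. I would simply invoke their construction as a black box and plug it into the Chebyshev framework above, since nothing in the usage inside Algorithm~\ref{alg:LoadApx} requires anything more than the stated $L$-norm guarantee.
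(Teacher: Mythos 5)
The paper offers no proof of Theorem~\ref{thm:LapSolve} at all --- it is imported as a black box from Jambulapati--Sidford~\cite{JS21laplacian} --- and your proposal ultimately does the same, deferring the ultrasparsifier construction to that work, so you are taking essentially the same route as the paper. Your sketch of the surrounding machinery (preconditioned Chebyshev with an ultrasparse preconditioner, recursive elimination of degree-one/two vertices, union bound over the randomized levels) is a correct account of the standard framework and is consistent with how the cited result is obtained.
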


\begin{restatable}{lemma}{ApxSolveLoad}
\label{lem:ApxSolveLoad}
For any approximation factor $0 < \loade < 1$, and any weighted graph $(G,w)$, let $\loadw = \left( w_e \sum_f \aeLf \right)_{e \in E} $ be the true loads, and $\apxloadw = \GetApproxLoad(G, w, \loade)$ be the approximate loads computed by the algorithm when using an approximate Laplacian solver. Then with probability $\ge 1-\nicefrac{1}{\poly(n)}$,
\[
(1 - \loade) \cdot \loadw(e)  \le \apxloadw(e) \le (1 + \loade) \cdot \loadw(e) \qquad \text{for all $e \in E$}
\]
\end{restatable}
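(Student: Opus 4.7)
The plan is to show that the approximate-solver output $\tU\tp b_e$ is coordinate-wise close in $\ell_\infty$ to the true sketched vector $Cv$, where $v = BL\pinv b_e$, and then exploit the robustness of Indyk's $\recovernorm$ to entry-wise perturbations. Since \GetApproxLoad{} invokes $\sketchmatrix$ with accuracy $\loade/2$, \cref{thm:SketchApprox} guarantees that $\recovernorm(Cv)$ is a $(1\pm\loade/2)$ multiplicative approximation of $\|v\|_1 = \loadw(e)/w_e$ with high probability. It therefore suffices to control $|\recovernorm(\tU\tp b_e) - \recovernorm(Cv)| \le (\loade/2)\,\|v\|_1$; the two sources of error then combine additively (in relative terms) to give the required $(1\pm\loade)$ bound after multiplying by $w_e$.

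For the coordinate-wise closeness, the $i$-th entry of $\tU\tp b_e - Cv$ equals $b_e\tp(\tU^{(i)} - L\pinv X^{(i)})$. I would apply the generalized Cauchy--Schwarz inequality $|b_e\tp z| \le \|b_e\|_{L\pinv}\,\|z\|_L$, valid since $b_e$ lies in the range of $L$, and bound $\|b_e\|_{L\pinv}^2 = R_{\mathrm{eff}}(e) \le 1/w_e$ by Rayleigh monotonicity (the single-edge path has resistance $1/w_e$). The second factor is at most $\lsolvee\cdot\|L\pinv X^{(i)}\|_L$ by \cref{thm:LapSolve}; writing $c_i$ for the $i$-th row of $C$ so that $X^{(i)} = B\tp c_i$, the projection property $\|\Pi\|\le 1$ from \cref{lem:Pi_is_Projection} yields
\[
\|L\pinv X^{(i)}\|_L^2 \;=\; c_i\tp B L\pinv B\tp c_i \;=\; c_i\tp W^{-\nicefrac{1}{2}}\Pi\, W^{-\nicefrac{1}{2}} c_i \;\le\; c_i\tp W^{-1} c_i \;\le\; 2\|c_i\|_2^2,
\]
since the weights satisfy $w_e = (p_e+1/m)^{-1} \ge 1/2$. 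A standard Cauchy tail bound gives $\|c_i\|_\infty \le \poly(n)$ for all $i$ with high probability, so $\|L\pinv X^{(i)}\|_L \le \poly(n,m)$ and the per-coordinate error of $\tU\tp b_e - Cv$ is at most $\lsolvee\cdot\poly(n,m)$.

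Next I would use the internal structure of $\recovernorm$: it outputs a fixed constant multiple of $\median(|y_1|,\ldots,|y_\ell|)$, and since $\median$ is $1$-Lipschitz in $\ell_\infty$ we get $|\recovernorm(\tU\tp b_e) - \recovernorm(Cv)| = O(\|\tU\tp b_e - Cv\|_\infty)$. To convert this into a multiplicative bound, I would lower-bound $\|v\|_1 \ge |b_e\tp L\pinv b_e| = R_{\mathrm{eff}}(u_e,v_e) \ge 1/\poly(n,m)$, which holds because the weights are polynomially bounded above (so effective resistances are polynomially bounded below). Choosing $\lsolvee = 1/\poly(n,m,1/\loade)$ with a sufficiently large degree then makes the total perturbation at most $(\loade/2)\,\|v\|_1$, finishing the proof; the resulting $\log(1/\lsolvee) = O(\log n)$ factor in the running time of $\lapsolve$ is absorbed into the $\tO(\cdot)$ notation.

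The main obstacle is precisely the robustness step: because $\tU\tp b_e$ does \emph{not} lie in the column space of $C$, the sketch guarantee of \cref{thm:SketchApprox} cannot be invoked as a black box. One has to open up the internals of $\recovernorm$ and argue that its median-of-absolute-values computation is stable under entry-wise perturbations, and then calibrate $\lsolvee$ using polynomial control over both the Cauchy sketch entries and the inverse effective resistances. The rest of the argument is essentially the same chain of inequalities as in the exact-solver proof of \cref{lem:ApxLoad}.
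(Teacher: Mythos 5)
Your proposal is correct, but the way you establish the crucial $\ell_\infty$-closeness of the solver output to the true sketched vector is genuinely different from the paper's. The paper works column-by-column on $\tU$: it converts the solver's $L$-norm guarantee into an entrywise guarantee on each potential vector via crude norm equivalences ($\|x\|_L \le 2n\|x\|_\infty$ and a Cheeger-based bound $\lambda_2 \ge n^{-3}$, \cref{lem:InftyLUB,lem:InftyLLB,lem:InftyClose}), and then passes from columns to the rows appearing in $\tU\tp b_e$ at the price of a factor $K = \max_{i,j}(CBL\pinv)_{ij}$ (\cref{lem:RowtoColumnInfty}). You instead bound each coordinate $b_e\tp\bigl(\tU^{(i)} - L\pinv X^{(i)}\bigr)$ directly by the generalized Cauchy--Schwarz inequality, controlling $\|b_e\|_{L\pinv}^2 = b_e L\pinv b_e\tp \le 1/w_e$ by Rayleigh monotonicity and $\|L\pinv X^{(i)}\|_L^2 = c_i\tp W^{-\nicefrac{1}{2}}\Pi W^{-\nicefrac{1}{2}}c_i \le 2\|c_i\|_2^2$ via the projection property of $\Pi$ (\cref{lem:Pi_is_Projection}) and $w_e \ge 1/2$; this avoids the spectral detour and the column-to-row step entirely, at the cost of a $\poly(n,m)$ bound on the Cauchy sketch entries, which the paper also needs (its $K \le \poly(n,m)$). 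From that point on the two arguments coincide: both open up $\recovernorm$ as a median of absolute values and use its $1$-Lipschitzness in $\ell_\infty$ (the paper's \cref{lem:MedianApx,lem:CombinedApx}), both convert the resulting additive error into a relative one using a $1/\poly(n,m)$ lower bound on $\|BL\pinv b_e\|_1$ coming from effective resistances with $w_{\max}\le m$ (the paper's \cref{lem:LoadLB}), and both take $\lsolvee = 1/\poly(n,m)$ so that $\lapsolve$ still runs in $\tO(m)$ time. Your energy-based route is arguably cleaner and exploits the electrical structure more directly, whereas the paper's route is more generic (it only uses that the solver error is small in $L$-norm and that the graph is connected with bounded weights).
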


We prove this lemma in \cref{sec:proofGetApxSolveLoad}, and then explain the changes to the algorithm in \cref{sec:changes}.

\subsection{\texorpdfstring{Proof of \cref{lem:ApxSolveLoad}}{Proof of Lemma ApxSolveLoad}}
\label{sec:proofGetApxSolveLoad}

We assume without loss of generality that $L\pinv x \perp \allones$ and $y \perp \allones$, where $\allones$ is the the all-ones vector, $x$ is the input to $\lapsolve$ and $y$ is the output received. This is because shifting a vector of potentials by a constant vector does not affect the guarantees in the $L$ norm.

We will use the following two properties of $\sketchmatrix$ and $\recovernorm$ in our analysis:
\begin{itemize}
    \item $\recovernorm(x_1, x_2, \ldots, x_{\ell}) = \medianx$.
    \item $\max_{i, j} C_{ij} \le \poly(m)$.
\end{itemize}

We first prove \cref{lem:ApxSolveLoad} assuming the following lemmas.
The first lemma converts the guarantee we have on the $L$ norm to a guarantee on the $\ell_{\infty}$ norm.
\begin{restatable}{lemma}{InftyClose}
\label{lem:InftyClose}
Suppose we have two vectors $x, y \in \R^n$ such that $x, y \perp \allones$ and $\|x - y\|_L \le \epsilon \|y\|_L$. Then $\|x - y\|_{\infty} \le \epsilon \cdot \totalb \cdot \|y\|_{\infty}$.
\end{restatable}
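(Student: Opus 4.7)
The plan is to convert the guarantee in the $L$-norm into an $\ell_\infty$-norm guarantee by passing through the $\ell_2$-norm, and then paying a polynomial factor in $n$ that comes from the effective condition number of $L$ on the subspace orthogonal to $\allones$. Since both $x$ and $y$ are assumed orthogonal to $\allones$, so is $x - y$, and therefore $x-y$ lies in the range of $L$, where $L$ acts as a positive definite operator. This means we can legitimately compare the $L$-norm and the $\ell_2$-norm via the second smallest eigenvalue $\lambda_2(L)$ (the smallest nonzero eigenvalue).

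Concretely, I would chain the following inequalities:
\[
\|x-y\|_\infty \;\le\; \|x-y\|_2 \;\le\; \frac{\|x-y\|_L}{\sqrt{\lambda_2(L)}} \;\le\; \frac{\epsilon \, \|y\|_L}{\sqrt{\lambda_2(L)}},
\]
using the hypothesis on $\|x-y\|_L$ in the last step. Next, upper-bound $\|y\|_L$ using the largest eigenvalue and then convert $\ell_2 \to \ell_\infty$:
\[
\|y\|_L \;\le\; \sqrt{\lambda_n(L)}\,\|y\|_2 \;\le\; \sqrt{n\,\lambda_n(L)}\;\|y\|_\infty.
\]
Combining these yields $\|x-y\|_\infty \le \epsilon \sqrt{n\,\lambda_n(L)/\lambda_2(L)}\,\|y\|_\infty$, reducing the lemma to bounding the spectral ratio.

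The main (though routine) obstacle is bounding $\lambda_n(L)/\lambda_2(L)$ by something of order $n^5$ so that the final factor is at most $\totalb = 2n^3$. Recall that in Algorithm~\ref{alg:MWURoute} the weights satisfy $w_e = \weight$ with $p_e \in [0,1]$, so $w_e \in [\nicefrac{1}{2},\,m]$. For the upper eigenvalue I would use the standard bound $\lambda_n(L) \le 2\,\max_v \sum_{u\sim v} w_{uv} \le \ub \cdot m$. For the lower eigenvalue I would use $\lambda_2(L) \ge w_{\min}\cdot \lambda_2(L_G)$, where $L_G$ is the unweighted Laplacian of the underlying (connected) graph $G$, together with the standard estimate $\lambda_2(L_G) \ge \lb$ valid for any connected graph on $n$ vertices (e.g.\ via the diameter bound $\lambda_2(L_G)\ge 4/(n\cdot \mathrm{diam}(G))$). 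This yields $\lambda_2(L) \ge \nicefrac{1}{2}\cdot \lb$.

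Plugging these in gives
\[
\sqrt{\frac{n\,\lambda_n(L)}{\lambda_2(L)}} \;\le\; \sqrt{\frac{n\cdot \ub \cdot m}{\nicefrac{1}{2}\cdot \lb}} \;=\; 2 n^2 \sqrt{m} \;\le\; \totalb,
\]
where the last step uses $m \le \lbnm$. Substituting this bound into the chain above delivers $\|x-y\|_\infty \le \epsilon \cdot \totalb \cdot \|y\|_\infty$, as required. The only subtle points are to make sure $x-y \perp \allones$ is actually used (to apply the $\lambda_2$ bound rather than the true smallest eigenvalue $0$ of $L$), and to verify that the underlying graph $G$ is connected so that $\lambda_2(L_G) > 0$; the latter is implicit in the setting, since oblivious routing is only defined on connected graphs.
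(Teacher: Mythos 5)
Your proof is correct and follows the same overall strategy as the paper's: both arguments relate the $L$-norm to the $\ell_\infty$-norm through the extreme nonzero eigenvalues of $L$, using the fact that $x-y$ is orthogonal to the all-ones vector (and that $G$ is connected) so that the second-smallest eigenvalue, rather than $0$, governs the lower bound. The difference lies in how the two spectral estimates are instantiated. The paper proves two auxiliary lemmas: $\|z\|_L \le 2n\,\|z\|_\infty$, obtained by bounding the quadratic form edge by edge, and $\|z\|_L \ge n^{-3/2}\|z\|_2 \ge n^{-2}\|z\|_\infty$ for $z$ orthogonal to the all-ones vector, obtained from Cheeger's inequality ($\lambda_2 \ge h^2/(2\Delta) \ge n^{-3}$); chaining these with the hypothesis yields the factor $2n^3$. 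You instead bound $\lambda_{\max}(L) \le 2nm$ via the maximum weighted degree and $\lambda_2(L) \ge w_{\min}\,\lambda_2(L_G) \ge \tfrac12 n^{-2}$ via the diameter bound for connected graphs, and then use $m \le n^2$ to arrive at $2n^2\sqrt{m} \le 2n^3$. One point in your favour: you explicitly track the algorithm's weights $w_e = (p_e + 1/m)^{-1} \in [\tfrac12, m]$, whereas the paper's auxiliary lemmas evaluate $x^\top L x = \sum_{(u,v)\in E}(x_u - x_v)^2$ and apply Cheeger with the unweighted Cheeger constant and maximum degree, i.e., effectively for unit weights; your accounting is therefore the more careful one for the weighted Laplacian actually used in the algorithm, at the harmless cost of invoking $m \le n^2$ and the diameter-based eigenvalue bound.
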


Once we have this $\ell_{\infty}$ guarantee for the columns of our matrix $\tU$, we convert this to a guarantee on the rows of $\tU$ using the following lemma.

\begin{restatable}{lemma}{RowtoColumnInfty}
\label{lem:RowtoColumnInfty}
Let $U$ be an $n \times \ell$ matrix, with $\largest$ being the largest value present in the matrix. Let $\tU$ be an approximation of $U$ such that for every $j \in [\ell]$, the $j$-th column $\tC_j$ of $\tU$ is $\epsilon$-close to the $j$-th column $C_j$ of $U$ in the following sense: $\forall j \in [\ell],  \|\tC_j - C_j\|_{\infty} \le \epsilon \cdot \|C_j\|_{\infty}$. Then for any $i \in [n]$, and rows $R_i$ of $U$ and $\tR_i$ of $\tU$,
\[
\| \tR_i - R_i \|_{\infty} \le \epsilon \cdot \largest
\]
\end{restatable}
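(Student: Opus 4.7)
\textbf{Proof proposal for Lemma~\ref{lem:RowtoColumnInfty}.} The plan is to reduce the row-wise bound to an entrywise bound, using the column hypotheses together with the fact that every column norm is controlled by the global maximum $\largest$. Concretely, for any fixed pair $(i,j) \in [n] \times [\ell]$, I would write
\[
|\tU_{ij} - U_{ij}| \;\le\; \|\tC_j - C_j\|_\infty \;\le\; \epsilon \cdot \|C_j\|_\infty \;\le\; \epsilon \cdot \largest,
\]
where the first inequality is because $|\tU_{ij} - U_{ij}|$ is one of the entries whose maximum defines $\|\tC_j - C_j\|_\infty$, the second is the hypothesis of the lemma applied to column $j$, and the third uses the assumption that $\largest$ is the largest entry of $U$ (and hence bounds $\|C_j\|_\infty$ for every $j$).

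Taking the maximum of the displayed inequality over $j \in [\ell]$ for a fixed row index $i$ yields
\[
\|\tR_i - R_i\|_\infty \;=\; \max_{j \in [\ell]} |\tU_{ij} - U_{ij}| \;\le\; \epsilon \cdot \largest,
\]
which is exactly the conclusion. There is no substantive obstacle here, since the argument is simply the passage from a column-by-column $\ell_\infty$ guarantee to an entrywise one and then to a row-wise one; the only point worth flagging is ensuring that $\largest$ is indeed an upper bound on $\|C_j\|_\infty$ for every column $j$, which is immediate from the hypothesis that $\largest$ is the largest value in the full matrix.
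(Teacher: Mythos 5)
Your argument is correct and is essentially the paper's own proof: both pass from the column-wise $\ell_\infty$ hypothesis to an entrywise bound $|\tU_{ij}-U_{ij}| \le \epsilon\,\|C_j\|_\infty \le \epsilon\,\largest$ and then take the maximum over $j$ to get the row bound (the paper merely routes this through the global maximum over all entries). No gap to report.
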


We use the following lemma to show that $\recovernorm$ works well on approximate sketched vectors.

\begin{restatable}{lemma}{CombinedApx}
\label{lem:CombinedApx}
Let $a \in \R$ be a real number. Suppose $x \in \R^\ell$ satisfies
\[
\ome \cdot a \le \medianx \le \ope \cdot a
\]
Suppose $y \in \R^\ell$ is such that $\|x - y\|_{\infty} \le \edash$, then $y$ satisfies
\[
\ome \cdot a - \edash \le \mediany \le \ope \cdot a + \edash
\]
\end{restatable}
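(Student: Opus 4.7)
The plan is to first establish that the coordinate-wise magnitudes of $y$ and $x$ are close, and then transfer this closeness to the median via two standard properties of the \median\ operator. The key observation is that \median\ is an order statistic, so it interacts well both with coordinate-wise inequalities and with uniform shifts of its inputs.

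First I would apply the reverse triangle inequality in each coordinate: for every $i \in [\ell]$,
\[
\big| |y_i| - |x_i| \big| \;\le\; |y_i - x_i| \;\le\; \|y - x\|_\infty \;\le\; \edash,
\]
which gives the uniform coordinate-wise sandwich $|x_i| - \edash \le |y_i| \le |x_i| + \edash$ for every $i \in [\ell]$.

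Next I would invoke two properties of \median\ acting on an $\ell$-tuple of nonnegative reals: (i) monotonicity, namely if $u_i \le v_i$ for all $i$ then $\median(u_1,\ldots,u_\ell) \le \median(v_1,\ldots,v_\ell)$; and (ii) translation equivariance, namely $\median(u_1 + c, \ldots, u_\ell + c) = \median(u_1,\ldots,u_\ell) + c$ for any constant $c \in \R$. Both are immediate from the definition of \median\ as an order statistic (and remain valid under the convention of averaging the two middle values when $\ell$ is even). Applying monotonicity to the sandwich from the previous paragraph, with $v_i = |x_i| + \edash$ on the upper side and $u_i = |x_i| - \edash$ on the lower side, and then pulling out the constant $\pm \edash$ via translation equivariance, yields
\[
\medianx - \edash \;\le\; \mediany \;\le\; \medianx + \edash.
\]

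Finally I would substitute the hypothesis $\ome \cdot a \le \medianx \le \ope \cdot a$ into this chain to obtain the claimed two-sided bound on \mediany. I do not anticipate any real obstacle: each step is a one-line deduction, and the proof's content is essentially the fact that, while \recovernorm\ is not a norm, its \median-of-absolute-values shape is $1$-Lipschitz in the $\ell_\infty$ distance on the input, which is precisely the stability property needed when the Laplacian solver returns an approximation \emph{after} sketching rather than before.
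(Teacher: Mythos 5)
Your proof is correct, and it follows the same overall skeleton as the paper's: pass from $x-y$ to the absolute values via the reverse triangle inequality, establish that \median\ of the absolute values moves by at most $\edash$ under an $\ell_\infty$-perturbation of size $\edash$, and then substitute the hypothesis on \medianx. The difference lies in how the stability step is proved. The paper isolates it as a separate lemma (\cref{lem:MedianApx}) and proves it by sorting $x$, introducing the permutation $\pi$ that sorts $y$, and doing a case analysis comparing $y_{\pi(\med)}$ with $x_{\med}$; that argument is stated for odd $\ell$. You instead derive the same $1$-Lipschitz property from two clean facts about order statistics — coordinate-wise monotonicity (if $u_i \le v_i$ for all $i$ then each order statistic of $u$ is at most the corresponding one of $v$) and translation equivariance under adding a constant to every coordinate — applied to the sandwich $|x_i|-\edash \le |y_i| \le |x_i|+\edash$. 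Both properties are genuinely immediate, your argument avoids the permutation bookkeeping, and it works for even $\ell$ under the averaging convention as well, so it is a slightly more general and more transparent route to the same inequality; the paper's explicit case analysis buys nothing extra here beyond being self-contained at the level of first principles. No gaps.
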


We finally use a lower bound on the effective resistance to convert an additive approximation guarantee into a multiplicative one. This follows from Rayleigh's monotonicity law and is shown, for example, by Spielman and Srivastava~\cite[Proposition 10]{SS11sparse}.

\begin{restatable}{lemma}{LoadLB}
\label{lem:LoadLB}
For any weighted graph $(G, w)$, the load on any edge $e$ has the following lower bound.
\[
\loadw(e) \ge \frac{2w_{e}}{nw_{\max}},
\] where $w_{\max}$ is the maximum edge weight.
\end{restatable}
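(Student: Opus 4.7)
\medskip
\noindent\textbf{Proof plan for Lemma~\ref{lem:LoadLB}.}

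The plan is to lower bound $\loadw(e)$ by a single term of the defining sum, and then apply Rayleigh's monotonicity law twice. First, since every term in the definition $\loadw(e) = w_e \sum_f |b_e L\pinv b_f\tp|$ is nonnegative, I would keep only the diagonal term $f = e$, which yields
\[
\loadw(e) \;\ge\; w_e \cdot |b_e L\pinv b_e\tp| \;=\; w_e \cdot R_{\text{eff}}(e),
\]
where $R_{\text{eff}}(e)$ is the effective resistance between the two endpoints $u,v$ of $e$ in the weighted graph $(G,w)$.

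Next, the task reduces to showing $R_{\text{eff}}(e) \ge 2/(n\, w_{\max})$. I would do this in two steps via Rayleigh monotonicity. Step one: scaling every conductance up to $w_{\max}$ only increases conductances, hence only decreases effective resistance; so $R_{\text{eff}}(e)$ in $(G,w)$ is at least the effective resistance between $u$ and $v$ in the graph $G'$ obtained from $G$ by setting every edge weight to $w_{\max}$. Step two: $G'$ is a subgraph of the complete graph $K_n$ with uniform conductance $w_{\max}$; adding edges only decreases effective resistance, so $R_{\text{eff}}^{G'}(u,v) \ge R_{\text{eff}}^{K_n(w_{\max})}(u,v)$. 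A standard symmetry/spectral calculation on $K_n$ (the Laplacian of $K_n$ with unit weights has eigenvalue $n$ on the subspace orthogonal to $\allones$) gives $R_{\text{eff}}^{K_n(w_{\max})}(u,v) = 2/(n\, w_{\max})$ for every pair $u \neq v$. Chaining these inequalities yields $R_{\text{eff}}(e) \ge 2/(n\, w_{\max})$, and multiplying by $w_e$ produces the claimed bound. The paper conveniently points to \cite[Proposition~10]{SS11sparse}, which can be invoked directly for the $R_{\text{eff}}(u,v) \ge 2/(n\, w_{\max})$ step.

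There is essentially no obstacle here: the lemma is a short consequence of a well-known effective-resistance lower bound combined with the trivial observation that a single diagonal term already lower-bounds $\loadw(e)$. The only minor point to keep in mind is the placement of $w_e$ versus $w_{\max}$: the factor $w_e$ comes from the prefactor in the definition of $\loadw(e)$, while the $w_{\max}$ in the denominator comes from the Rayleigh step that uniformizes the conductances. No further parameters from the algorithm or from the sketching machinery enter the argument.
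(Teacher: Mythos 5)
Your proposal is correct and follows essentially the same route the paper intends: drop all but the diagonal term to get $\loadw(e) \ge w_e\, b_e L\pinv b_e\tp$, i.e.\ $w_e$ times the effective resistance across $e$, and then lower-bound that resistance by $2/(n\,w_{\max})$ via Rayleigh monotonicity against the complete graph with uniform conductance $w_{\max}$ (the paper simply cites \cite[Proposition~10]{SS11sparse} for this step rather than writing it out). No gaps.
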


We first use these four lemmas to prove \cref{lem:ApxSolveLoad}.

\ApxSolveLoad*
\begin{proof}
Let $U = L\pinv B\tp C\tp$, and $\tU = \lapsolve(L, B\tp C\tp, \lsolved, \lsolvee)$ where we abuse notation to mean that $\lapsolve$ runs on each column of $B\tp C\tp$ and returns a column vector.
Let $R_i$ and $\tR_i$ denote the rows,  and $C_j$ and $\tC_j$ denote the columns of $U$ and $\tU$ respectively.
By guarantees of the approximate Laplacian solver, we have
\[
\|\tC_j - C_j\|_L \le \lsolvee \|C_j\|_L
\]
\cref{lem:InftyClose} tells us that
\[
\|\tC_j - C_j\|_{\infty} \le \lsolvee \cdot \totalb \cdot \|C_j\|_{\infty}
\]
With $K$ as an upper bound on the maximum entry in $CBL\pinv$, and using the above inequality in \cref{lem:RowtoColumnInfty}, we get that
\[
\|\tR_i - R_i\|_{\infty} \le \lsolvee \cdot \totalb \cdot \largest
\]
Note that for any $e = (u, v)$, $(U\tp b_e)\tp = R_u - R_v$, and $(\tU\tp b_e)\tp = \tR_u - \tR_v$. Thus
\begin{align*}
\|(U\tp b_e) - (\tU\tp b_e)\|_{\infty}
&= \|(R_u - R_v) - (\tR_u - \tR_v)\|_{\infty} \\
&\le \|R_u - \tR_u\|_{\infty} + \|R_v - \tR_v\|_{\infty} \\
&\le \ubound
\end{align*}
By choice of $\lsolvee = \epschoice$, the final expression is $\adde$.
Since $\recovernorm(x_1, x_2, \ldots, x_\ell) = \medianx$, and since the guarantees of $\recovernorm$ gives us
\[
\omet \cdot \sum_f \aeLf \le \recovernorm(U\tp b_e) \le \opet \cdot \sum_f \aeLf,
\] using \cref{lem:CombinedApx} on $U\tp b_e$ and $\tU\tp b_e$, we get
\[
\omet \cdot \sum_f \aeLf - \adde \le \recovernorm(\tU\tp b_e) \le \opet \cdot \sum_f \aeLf + \adde.
\]
Thus,
\[
\omet \cdot \load_w(e) - \nicefrac{\epsilon w_e}{2mn} \le \GetApproxLoad(G, w) \le \opet \cdot \load_w(e) + \nicefrac{\epsilon w_e}{2mn}
\]
Since for any edge $f \in E,$ we have $w_f = \left( p_f + \nicefrac{1}{m} \right)^{-1} \le m$,  we get $w_{\max} \le m.$ Thus, \cref{lem:LoadLB} gives us that $\loadw(e) \ge \nicefrac{w_e}{mn},$ which proves our lemma.
\end{proof}

We now prove the intermediate lemmas.
\begin{restatable}{lemma}{InftyLUB}
\label{lem:InftyLUB}
For any $x \in \R^n$, $\|x\|_L \le \ub \cdot \|x\|_{\infty}$.
\end{restatable}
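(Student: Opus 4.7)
The strategy is to expand $\|x\|_L^2$ as an edge sum via the standard Laplacian quadratic--form identity, bound each edge contribution by the trivial triangle inequality, and control the resulting total by exploiting the explicit shape of the MWU weights $w_e = (p_e + 1/m)^{-1}$ that feed the Laplacian in $\GetApproxLoad$.

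Concretely, I would first write
\[
\|x\|_L^2 \;=\; x\tp B\tp W B x \;=\; \|W^{1/2} B x\|_2^2 \;=\; \sum_{e=(u,v)\in E} w_e\,(x_u - x_v)^2,
\]
which is just the factorization $L = B\tp W B$ and the definition of the incidence matrix. For every edge I would then bound the single term by the triangle inequality,
\[
(x_u - x_v)^2 \;\le\; (|x_u| + |x_v|)^2 \;\le\; 4\|x\|_\infty^2,
\]
which, summed over $e$, produces
\[
\|x\|_L^2 \;\le\; 4\,\|x\|_\infty^2 \cdot \sum_{e\in E} w_e.
\]

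It then remains to bound $\sum_e w_e$. Because $p \in \Delta^m$ we have $p_e + 1/m \ge 1/m$, so $w_e = (p_e + 1/m)^{-1} \le m$ for every edge, and in particular $\sum_{e \in E} w_e \le m \cdot m = m^2$. Plugging this back in and taking square roots gives $\|x\|_L \le 2m\,\|x\|_\infty$, which in the paper's polynomial-slack accounting is exactly a bound of the $\|x\|_L \le \ub \cdot \|x\|_\infty$ type used downstream in \cref{lem:InftyClose}. The main potential obstacle is matching the literal constant $2n$ in the statement rather than the natural $2m$ produced above; any of the standard refinements---applying Cauchy--Schwarz to the identity $\sum_e w_e (p_e + 1/m) = m$ to exploit that the $w_e$ cannot all be simultaneously maximal, or combining $\|x\|_2^2 \le n\|x\|_\infty^2$ with a spectral bound $\|L\| \le 2\max_v d_v$---yields a polynomial-in-$n$ constant, and since the constant only enters the downstream $\totalb$-type factors in \cref{lem:InftyClose}, any such tightening suffices for the rest of the analysis.
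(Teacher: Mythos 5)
Your proof takes essentially the same route as the paper's: write $\|x\|_L^2 = x\tp L x$ as a sum over edges of per-edge terms and bound each term by $4\|x\|_\infty^2$. The only divergence is the handling of the edge weights. The paper's own proof silently drops them, writing $x\tp L x = \sum_{(u,v)\in E}(x_u-x_v)^2$ and summing the per-edge bound $4$ over $m \le n^2$ edges to get the constant $\ub$; yet the Laplacian to which the lemma is applied in \cref{sec:proofGetApxSolveLoad} is the weighted one with $w_e = \weight \le m$. You keep the weights, bound $\sum_e w_e \le m^2$, and obtain $\|x\|_L \le 2m\,\|x\|_\infty$, a polynomially weaker constant. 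Be aware that your suggestion that a refinement would recover the literal constant $2n$ for this weighted Laplacian cannot work: on a complete bipartite graph with $p$ concentrated on a single edge (so $w_e = m$ on all other edges) and $x = \pm 1$ on the two sides, one gets $\|x\|_L = \Theta(m)\,\|x\|_\infty \gg 2n\,\|x\|_\infty$, so your $2m$ bound is essentially tight and the stated $2n$ holds only for the unweighted Laplacian, which is what the paper's proof implicitly uses. This discrepancy is immaterial for the argument, as you correctly observe: the constant here only enters the polynomial slack in \cref{lem:InftyClose} (the factor $\totalb$) and the choice of $\lsolvee$ in \cref{lem:ApxSolveLoad}, all of which tolerate any fixed $\poly(n)$ factor, so your version of the lemma suffices for everything downstream.
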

\begin{proof}
Since $\|x\|_{\infty}^2 \le 1$ implies $|x_i|\le 1$ for all $i \in [n]$, and since $\|x\|_L^2 = x\tp L x$, we have
\begin{equation*}
\max_{x \in \R^n} \frac{\|x\|_L^2}{\|x\|_{\infty}^2}
\le \max_{x \in \R^n: |x_i| \le 1} x\tp L x
\end{equation*}
Since $x\tp L x = \sum_{(u,v) \in E} (x_u - x_v)^2$, and $|x_i| \le 1,$ each term is bounded by 4. Thus, the sum is at most $4m \le 4n^{2},$ giving the lemma.
\end{proof}

\begin{restatable}{lemma}{InftyLLB}
\label{lem:InftyLLB}
For any $x \in \R^n$ such that $x \perp \allones$, we have $\lb \cdot \|x\|_{\infty} \le \|x\|_L$.
\end{restatable}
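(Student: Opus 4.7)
The plan is to exploit two facts: any vector orthogonal to $\allones$ has entries of opposite signs (unless it is zero), and the Laplacian energy dominates the squared differences along any path in a connected graph. First I would argue that $x \perp \allones$ implies $\max_i x_i - \min_i x_i \ge \|x\|_\infty$: since $\sum_i x_i = 0$, setting $M = \max_i x_i$ and $m = \min_i x_i$, we must have $M \ge 0$ and $m \le 0$, so $M - m \ge \max(M,-m) = \|x\|_\infty$.

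Next I would pick vertices $u, v \in V$ with $x_u = M$ and $x_v = m$. Since $G$ is connected, there is a $u$–$v$ path $u = v_0, v_1, \ldots, v_k = v$ in $G$ with $k \le n - 1$. Writing $x_u - x_v$ as a telescoping sum and applying Cauchy--Schwarz gives
\[
\|x\|_\infty^2 \;\le\; (x_u - x_v)^2 \;=\; \Bigl(\sum_{i=0}^{k-1}(x_{v_i} - x_{v_{i+1}})\Bigr)^2 \;\le\; k \sum_{i=0}^{k-1} (x_{v_i} - x_{v_{i+1}})^2 \;\le\; (n-1)\sum_{(a,b)\in E} (x_a - x_b)^2.
\]

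Finally, I would convert the unit-weighted Dirichlet sum on the right into $\|x\|_L^2$. By construction in Algorithm~\ref{alg:MWURoute} the weights satisfy $w_e = (p_e + 1/m)^{-1} \ge 1/2$ (since $p_e \in [0,1]$), so $\sum_{(a,b)\in E}(x_a-x_b)^2 \le 2\sum_{(a,b)\in E} w_{ab}(x_a-x_b)^2 = 2\|x\|_L^2$. Combining with the previous chain yields $\|x\|_\infty^2 \le 2(n-1)\|x\|_L^2$, hence $\|x\|_L \ge \|x\|_\infty/\sqrt{2(n-1)}$, which is at least $n^{-2}\|x\|_\infty$ for all $n \ge 2$ because $n^4 \ge 2(n-1)$. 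The case $n = 1$ is vacuous since then $x = 0$.

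The main obstacle is really just bookkeeping: I need the lower bound $w_e \ge 1/2$ on the conductances generated by the algorithm to push the weighted Laplacian quadratic form back down to the unweighted Dirichlet sum, and then verify that the $\sqrt{n}$ loss from the path-length Cauchy--Schwarz step comfortably fits inside the $n^{-2}$ slack. No spectral or eigenvalue machinery is required; the argument is purely combinatorial once the sign structure forced by $x\perp \allones$ is observed.
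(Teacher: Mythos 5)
Your proof is correct, but it takes a genuinely different route from the paper. The paper's proof is spectral: it lower-bounds the second Laplacian eigenvalue $\lambda_2$ via Cheeger's inequality (using that every cut of a connected graph has at least one edge and the maximum degree is at most $n$), obtaining $\lambda_2 \ge n^{-3}$, and then uses the variational characterization to get $\|x\|_L \ge n^{-3/2}\|x\|_2 \ge n^{-3/2}\|x\|_\infty$ for $x \perp \allones$. You instead argue combinatorially: orthogonality to $\allones$ forces $\max_i x_i - \min_i x_i \ge \|x\|_\infty$, a telescoping sum along a simple path between the extremal vertices plus Cauchy--Schwarz bounds $\|x\|_\infty^2$ by $(n-1)$ times the unweighted Dirichlet form, and the explicit lower bound $w_e = (p_e + 1/m)^{-1} \ge 1/2$ transfers this to $\|x\|_L^2$. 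Your version avoids eigenvalue machinery entirely, actually yields the stronger bound $\|x\|_L \ge \|x\|_\infty/\sqrt{2(n-1)}$ (versus the paper's $n^{-3/2}\|x\|_\infty$, both of which comfortably dominate the claimed $n^{-2}$), and makes explicit the dependence on the weight lower bound, which the paper's Cheeger computation treats only implicitly (it essentially argues on the unweighted Laplacian). Both arguments need $G$ connected, which is implicit throughout the paper (and necessary for the lemma, since otherwise $L$ has additional kernel vectors orthogonal to $\allones$); it would be worth stating this hypothesis once. Since the lemma is only ever invoked for the algorithm's weights $w_e = (p_e + 1/m)^{-1}$, your reliance on $w_e \ge 1/2$ does not restrict its applicability.
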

\begin{proof}
Since $\min_{x \in \R^n} \|x\|_L^2 / \|x\|_2^2 = 0$ which is attained by $\allones$, by the variational characterization of eigenvalues of a symmetric matrix, the second smallest eigenvalue $\lambda_2$ of $L$ is given by
\[
\min_{x \in \R^n: x \perp \allones} \frac{\|x\|_L^2}{\|x\|_2^2} = \lambda_2
\]
By Cheeger's inequality, $\lambda_2 \ge h^2/2\Delta$, where $h = \min_{S : |S|  \le n/2} |E(S, \overline{S})|/|S|$ is the unnormalized Cheeger's constant and $\Delta$ is the maximum degree in the graph. Thus
\[
\lambda_2 \ge h^2 \cdot \frac{1}{2\Delta} \ge \frac{4}{n^2} \cdot \frac{1}{2n} \ge n^{-3}
\]
Thus for any $x \in \R^n$ such that $x \perp \allones$, $\|x\|_L \ge n^{-3/2} \cdot \|x\|_2 \ge n^{-3/2} \cdot  \|x\|_{\infty}$, and the lemma follows.
\end{proof}

\InftyClose*
\begin{proof}
By \cref{lem:InftyLUB} and \cref{lem:InftyLLB}, we have
\[
\lb \cdot \|x-y\|_{\infty} \le \|x-y\|_L
\qquad\text{and}\qquad \|y\|_L \le \ub \cdot \|y\|_{\infty}
\] Together with the assumption in our lemma, we get
\[
\|x - y\|_{\infty} \le \lbnm \cdot \|x - y\|_L \le \epsilon \cdot \lbnm \cdot \|y\|_L \le \epsilon \cdot  \totalb \cdot \|y\|_{\infty}
\] as claimed.
\end{proof}

\RowtoColumnInfty*
\begin{proof}
For any $i \in [n]$,
\begin{align*}
\|\tR_i - R_i\|_{\infty}
&\le \max_{i \in [n]} \|\tR_i - R_i\|_{\infty} \\
&= \max_{i, j \in [n] \times [\ell]} | \tU_{ij} - U_{ij} | \\
&= \max_{j\in [\ell]} \| \tC_{j} - C_{j} \|_{\infty} \\
&\le \epsilon \cdot \max_{j \in [\ell]} \| C_{j} \|_{\infty} \\
&\le \epsilon \cdot \largest
\end{align*}
as required.
\end{proof}

\begin{restatable}{lemma}{MedianApx}
\label{lem:MedianApx}
Let $x, y \in \R^\ell$ be such that $\|x - y\|_{\infty} \le \epsilon$, where $\ell$ is odd. Then
\[
\medianx - \epsilon \le \mediany \le
\medianx + \epsilon
\]
\end{restatable}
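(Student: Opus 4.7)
}

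The plan is to first reduce to a statement purely about vectors with nonnegative entries, and then show that taking the median of sorted coordinates is $1$-Lipschitz with respect to the $\ell_\infty$ norm. For the reduction, I would apply the reverse triangle inequality coordinatewise: for every $i \in [\ell]$, $\bigl||x_i| - |y_i|\bigr| \le |x_i - y_i| \le \epsilon$. Thus, setting $a_i := |x_i|$ and $b_i := |y_i|$, it suffices to prove that if two vectors $a, b \in \R_{\ge 0}^\ell$ satisfy $\|a - b\|_\infty \le \epsilon$, then $|\median(a) - \median(b)| \le \epsilon$.

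For this Lipschitz claim, let $a_{(1)} \le a_{(2)} \le \cdots \le a_{(\ell)}$ and $b_{(1)} \le b_{(2)} \le \cdots \le b_{(\ell)}$ denote the sorted entries, and set $k = (\ell+1)/2$, so $\median(a) = a_{(k)}$ and $\median(b) = b_{(k)}$. For the upper bound $b_{(k)} \le a_{(k)} + \epsilon$, consider the $k$ indices $I \subseteq [\ell]$ with $a_i \le a_{(k)}$; since $b_i \le a_i + \epsilon \le a_{(k)} + \epsilon$ for every $i \in I$, at least $k$ entries of $b$ are bounded by $a_{(k)} + \epsilon$, forcing its $k$-th smallest entry $b_{(k)}$ to be at most this value. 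The lower bound $b_{(k)} \ge a_{(k)} - \epsilon$ follows symmetrically by looking at the $\ell - k + 1$ indices with $a_i \ge a_{(k)}$: for each such $i$, $b_i \ge a_i - \epsilon \ge a_{(k)} - \epsilon$, so at least $\ell - k + 1$ entries of $b$ are $\ge a_{(k)} - \epsilon$, which pins $b_{(k)}$ from below.

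Combining these two bounds gives $\median(b) - \epsilon \le \median(a) \le \median(b) + \epsilon$, which after substituting back the definitions of $a$ and $b$ yields the stated inequality. The argument is essentially routine; the only subtlety is correctly counting indices in the two directions, which is why the assumption that $\ell$ is odd (so that $k = (\ell+1)/2$ gives a unique median) is used. I do not anticipate any real obstacle here, beyond keeping the pigeonhole argument on sorted entries clean.
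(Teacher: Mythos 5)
Your proposal is correct, and the first step (reducing to nonnegative vectors via the reverse triangle inequality $\bigl||x_i|-|y_i|\bigr|\le|x_i-y_i|$) is exactly the paper's. Where you diverge is in the core Lipschitz step: the paper sorts $x$, introduces a permutation $\pi$ sorting $y$, and argues by cases on whether the median position $\med$ equals, exceeds, or is below $\pi(\med)$, exhibiting in the unequal cases a crossing index $j$ with $x_j > x_{\med}$ and $y_j \le y_{\pi(\med)}$. You instead prove directly that the $k$-th order statistic (with $k=(\ell+1)/2$) is $1$-Lipschitz in $\ell_\infty$ by counting: at least $k$ entries of $b$ lie below $a_{(k)}+\epsilon$ and at least $\ell-k+1$ lie above $a_{(k)}-\epsilon$, which pins $b_{(k)}$ in the interval $[a_{(k)}-\epsilon,\,a_{(k)}+\epsilon]$. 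Your pigeonhole argument is cleaner, avoids the permutation bookkeeping entirely (the paper's claim $y_{\med}\ge y_{\pi(\med)}$ in the case $\med>\pi(\med)$ requires some care to justify, which your route sidesteps), and immediately generalizes to any order statistic, not just the median. One cosmetic point: with ties there may be \emph{more} than $k$ indices with $a_i\le a_{(k)}$ (and more than $\ell-k+1$ with $a_i\ge a_{(k)}$), so you should say ``choose $k$ such indices'' (resp.\ $\ell-k+1$); this does not affect the argument.
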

\begin{proof}
Note that if $|x_i - y_i| \le \epsilon$, then $| |x_i| - |y_i| | \le \epsilon$ as well. Thus we can assume without loss of generality that $x$ and $y$ are non-negative.

Relabel the indices such that $x_1 \le x_2 \le \ldots \le x_\ell$ are in non-decreasing order. The assumption on $\ell_{\infty}$ norm gives us that for all $i \in [\ell]$,
\begin{equation}
\label{eq:MedianApx}
|x_i - y_i| \le \epsilon
\end{equation}
Let $\pi$ be a permutation such that $y_{\pi(1)} \le y_{\pi(2)} \le \ldots \le y_{\pi(\ell)}$ is in non-decreasing order. Let $\med = \left\lceil \ell/2 \right\rceil $. We want to show that $|y_{\pi(\med)} - x_{\med}| \le \epsilon$.
If $\med = \pi(\med)$, then the lemma follows from \cref{eq:MedianApx} with $i = \med$.
So assume that $\med \neq \pi(\med)$.

Consider the case when $\med > \pi(\med)$. Then since $y_{\med} \ge y_{\pi(\med)}$, we have $x_{\med} \ge y_{\med} - \epsilon \ge y_{\pi(\med)} - \epsilon$. For the other direction, since $\med > \pi(\med)$, there exists an index $j$ such that $x_j > x_{\med}$ and $y_j \le y_{\pi(\med)}$. For this index, we have $y_{\pi(\med)} \ge y_j \ge x_j - \epsilon \ge x_m - \epsilon$ as required.
The case when $\med < \pi(\med)$ is symmetric.
\end{proof}

\CombinedApx*
\begin{proof}
Using \cref{lem:MedianApx} and the assumption on $x$,
\begin{align*}
\ome \cdot a - \edash
&\le \medianx - \edash \\
&\le \mediany \\
&\le \medianx + \edash \\
&\le \ope \cdot a + \edash
\end{align*}
as required.
\end{proof}

\subsection{\texorpdfstring{Changes in the algorithm}{Changes in the algorithm}}
\label{sec:changes}

The algorithm is very similar to the case with an exact Laplacian solver, with minor changes. We detail them for completeness.
Since the maximum entry in $C$ is upper bounded by $\poly(n, m)$, the maximum entry in $CBL\pinv$ is upper bounded by some $\poly(n,m)$ as well. We call this value as $K = \max_{i,j} (CBL\pinv)_{ij}$.
We use $\lsolvee = \epschoice$.

The only change in the running time from the earlier algorithm is the time taken by the approximate Laplacian solver. Recall that the running time of the approximate Laplacian solver is $\tO(m \log(1/\lsolvee))$.
Since the maximum entry in $C$ is bounded by $\poly(m)$, the maximum entry in $CBL\pinv$ is bounded by $\poly(n,m)$ as well. We denote this upper bound as $K$. Thus $\lsolvee \ge 1/\poly(n)$, and the approximate Laplacian solver runs in time $\tO(m)$.

\section{Routing on capacitated graphs}
\label{sec:capacitated}

We detail the changes to obtain an oblivious routing on capacitated graphs.

\subparagraph*{Capacitated Graph}
A capacitated graph $G = (V, E, u)$ is a undirected graph along with a function $u : E \to \R^+$ that represents the capacity of each edge.

\subparagraph*{Congestion}
Given a flow $f \in \R^m$, the congestion of an edge is the amount of flow on that edge relative to its capacity, given by ${|f_e|}/{u_e}$.

Let $U$ denote the $m \times m$ diagonal matrix with $u_e$ on the diagonals.
Kelner and Maymounkov~\cite[Theorem 3.1]{KM11electric} show that the worst-case demands for a capacitated graph is $u_e$ along each edge, i.e., the columns of $B\tp U$. Note that these can be routed optimally with congestion $1$, by simply routing each demand of $u_e$ across the same edge. In their presentation of the proof, they use $\{ w_e \}$ for both the capacities and the conductances. We, on the other hand, need to use conductances that are different from the capacities. While their proof works for our case, for the sake of clarity, we present their proof of worst-case demands in \cref{sec:worstcasedem} using our notation.
This then leads to the following definition of load.

\subparagraph*{Load}
For a linear oblivious routing $M$, the congestion of edge $e$ for routing $u_f \cdot b_f\tp$ is given by $ \load_M(f \to e) = u_e^{-1} \cdot u_f \cdot |(Mb_f\tp)_e|$. The load on edge $e$ is then given by summing this congestion up for each $f \in E$, giving
\[
\load_M(e) = \sum_f \load_M(f \to e) = u_e^{-1} \cdot \sum_f u_f \cdot |(M b_f\tp)_e|
\]
While $\textrm{cong}_M(e)$ would be better than $\load_M(e)$ in the capacitated case, we continue using $\load_M$ for continuity with the main body of the paper.
For an electrical flow with weights (i.e. conductances) $\left\{ w_e \right\} $, the corresponding oblivious routing is then $WBL\pinv$ (where the Laplacian is with respect to $W$, given by $B\tp W B$), and the load is
\[
\loadw(e) = \frac{w_e}{u_e} \cdot \sum_f u_f \cdot \aeLf
\]

To show the bound on competitive ratio for capacitated graphs, we then need to give a set of weights $ \left\{ w_e \right\} $ for each $p_e \in \Delta^m$ such that the MWU algorithm can be performed, and we need to show that we can still use sketching to get approximate loads in $\tO(m)$ time.

Concretely, we will use the weights $w_e = u_e^2 \cdot \weight$ in the algorithm. We need to prove analogues of \cref{lem:OracleReturn}, \cref{lem:WidthBound}, and we need to provide a version of \GetApproxLoad\ that works in the capacitated case. We do so in the next three sections. This, then, will give us \cref{thm:MainThm}.

\subsection{Bound on Average Loads}

\begin{restatable}{lemma}{CapOracleReturn}
\label{lem:CapOracleReturn}
For any probability distribution $p \in \Delta^m$, the oblivious routing $M_w$ corresponding to the electrical network with weights $w_e = u_e^2 \cdot \weight$ satisfies $\sum_e p_e \loadw(e) \le 2\alocal$.
\end{restatable}

\begin{proof}
Setting $\ell_e$ to $\nicefrac{u_e}{\sqrt{w_e}}$ and applying~\cref{lem:Localization}, we get
\begin{align*}
\sum_e p_e \loadw(e)
&= \sum_e p_e \cdot \sum_{f} \loadw(f\to e) \\
&= \sum_e p_e \cdot \nicefrac{w_e}{u_e} \cdot \sum_{f} u_f \cdot |b_e L\pinv b_f\tp| \tag*{(by definition of load)}\\
&= \sum_e p_e \cdot u_e \cdot \weight \cdot \sum_{f} u_f \cdot |b_e L\pinv b_f\tp| \tag*{(by definition of $w_e$)}\\
&\le \sum_e \sum_{f} u_e u_f \cdot |b_e L\pinv b_f\tp| \\
&= \sum_{e, f} \nicefrac{u_eu_f}{\sqrt{w_e w_f}} \cdot \sqrt{w_ew_f} \cdot |b_e L\pinv b_f\tp | \\
&= \sum_{e, f} \ell_e \ell_f \cdot \sqrt{w_e w_f} \cdot |b_e L\pinv b_f\tp | \tag*{(by choice of $\ell_e$)}\\
&\le \alocal \cdot \|\ell\|_2^2 \tag*{(by \Cref{lem:Localization})} \\
&= \alocal \cdot \sum_e \nicefrac{u_e^2}{w_e} \\
&= \alocal \cdot \sum_e \left( p_e + \nicefrac{1}{m} \right) \\
&= 2 \alocal,
\end{align*} as required.
\end{proof}

\subsection{Bound on Width}

\newcommand{\projC}{\ensuremath{U^{-\nicefrac{1}{2}}W^{\nicefrac{1}{2}} B L\pinv B\tp W^{\nicefrac{1}{2}}U^{\nicefrac{1}{2}}}}

We will use \cref{lem:widthHelpfulLemma} again in our proof of the following lemma.

\begin{restatable}{lemma}{CapWidthBound}
\label{lem:CapWidthBound}
For any probability distribution $p \in \Delta^m$, the oblivious routing $M_w$ corresponding to the electrical network with weights $w_e = u_e^2 \cdot \weight$ satisfies $\loadw(e)  \le \sqrt{2m}$ for every edge $e$.
\end{restatable}

\begin{proof}%
Note that by definition of $w_e$, we have $u_e = \sqrt{w_e \cdot \weightinv} $. Fixing an edge $e$,
\begin{align*}
\loadw(e)
&= \nicefrac{w_e}{u_e} \cdot \sum_f u_f \cdot |b_e L\pinv b_f\tp| \\
&\le \sqrt{\nicefrac{w_e}{\left(p_e + \nicefrac{1}{m}\right)}} \cdot \sum_f \sqrt{w_f \cdot \left( p_f + \nicefrac{1}{m} \right) }  \cdot |b_e L\pinv b_f\tp| \tag*{(by definition of $u_e$ and $u_f$)} \\
&\le \sum_f \sqrt{\nicefrac{\left( p_f + \nicefrac{1}{m} \right) }{\left(p_e + \nicefrac{1}{m}\right)}} \cdot \sqrt{w_e w_f}  \cdot |b_e L\pinv b_f\tp| \\
&\le \sqrt{\sum_f \nicefrac{\left( p_f + \nicefrac{1}{m} \right) }{\left(p_e + \nicefrac{1}{m}\right)}} \cdot \sqrt{\sum_f w_e w_f \cdot |b_e L\pinv b_f\tp|^2}  \tag*{(by Cauchy-Schwarz)}
\end{align*}

We now bound each of the two terms separately. For the first term, note that
\begin{align*}
\weight \cdot \sum_f \left( p_f + \nicefrac{1}{m}\right)
\le 2 \cdot \weight
\le 2 \cdot \nicefrac{1}{\left(\nicefrac{1}{m}\right)} = 2m.
\end{align*}

For the second term, by Lemma~\ref{lem:widthHelpfulLemma}, we know that
\[
    \sum_f w_e w_f |b_e L\pinv b_f \tp|^2 \leq 1.
\]

Putting these two inequalities together, we get that $\loadw(e) \le \sqrt{2m}$ for any edge $e$, which gives the desired bound of $\sqrt{2m}$ on the width.
\end{proof}

\subsection{Capacitated \GetApproxLoad}

\begin{algorithm}[th]
\SetAlgoLined
\DontPrintSemicolon
\caption{\GetApproxLoad, to compute approximate loads for electrical routing.}
\label{alg:capLoadApx}
\KwInput{A graph $G$, weights $\{ w_e \}_{e \in E}$ and capacities $\{ u_e \}_{e \in E}$ on the edges, approximation factor $\loade$.}
\KwOutput{Approximation $\{ \apxloadw(e) \}_{e \in E}$ to the load on the edges.}
Let $B$ be the edge-vertex incidence matrix of $G$\;
Let $L := B\tp \diag(w) B$ be the Laplacian matrix\;
Set $C \leftarrow \sketchmatrix(m, n^{-10}, \loade)$\;
Set $X \leftarrow B\tp U C\tp$\;
Let $X^{(i)}$ be the $i^{th}$ column of $X$ for all $i \in [\ell]$\;
Set $V^{(i)} \leftarrow \xlapsolve(L, X^{(i)})$ for all $i \in [\ell]$ \;
Set $V \leftarrow (V^{(1)}, V^{(2)}, \ldots, V^{(\ell)})$ \algcomment{$V = (CUBL\pinv)\tp$}
Set $\apxloadw(e) \leftarrow w_e u_e^{-1} \cdot \recovernorm(U\tp b_e)$ for all $e \in E$\;
\Return{$\apxloadw$}
\end{algorithm}

Algorithm~\ref{alg:capLoadApx} contains the changes to \GetApproxLoad\ for the capacitated case. Correctness follows from noting that $\loadw(e)$ is the $\ell_1$ norm of the $e^{th}$ row of $U^{-1}WBL\pinv B\tp U$.

\subsection{Worst-case demands}
\label{sec:worstcasedem}

We present the proof of the worst-case demands for oblivious routing on a capacitated graph $G=(V, E, u)$ being $B\tp U$ from Kelner and Maymounkov~\cite[Theorem 3.1]{KM11electric}, using $\{ u_e \}$ for the edge capacities, and $M$ as any linear oblivious routing.

Since congestion of a linear oblivious routing scales linearly when the demands are multiplicatively increased, it suffices to consider demands that can be routed optimally with congestion 1. Let $\{ \chi_i \}_i$ be some set of demands that can be optimally routed with congestion 1, and let $\{ f_i \}_i$ be such an optimal routing.
The claim follows from noting that demands $\{ \sum_i |f_{i,e}| \}_{e}$, i.e., demands of $\sum_i |f_{i,e}|$ across each edge $e$, can still be (non-linearly) routed with congestion 1.
\begin{align*}
\beta_{\infty}(M)
&\le \max\limits_{e'} u_{e'}^{-1} \cdot \sum\limits_{i} |b_{e'} M \chi_i| \tag*{for any $\{ \chi_i \}_i$ with $\OPT(\{ \chi_i \}_i)$ = 1} \\
&= \max\limits_{e'} u_{e'}^{-1} \cdot \sum\limits_{i} \left|b_{e'} M \left( \sum\limits_{e}f_{i,e} b_{e}\tp \right) \right| \tag*{(since $f$ routes $\{ \chi_i \}_i$)} \\
&= \max\limits_{e'} u_{e'}^{-1} \cdot \sum\limits_{i} \left|\sum\limits_{e}f_{i,e} \cdot b_{e'} M b_{e}\tp \right| \tag*{(by linearity of $M$)} \\
&\le \max\limits_{e'} u_{e'}^{-1} \cdot \sum\limits_{i,e} \left|f_{i,e} \cdot b_{e'} M b_{e}\tp \right| \tag*{(since $| \sum \cdot | \le \sum | \cdot |$)} \\
&= \max\limits_{e'} u_{e'}^{-1} \cdot \sum\limits_{e} \left| \sum\limits_i \left|f_{i, e}\right| \cdot b_{e'} M b_{e}\tp \right| \\
&\le \max\limits_{e'} u_{e'}^{-1} \cdot \sum\limits_{e} \left| u_e \cdot b_{e'} M b_{e}\tp \right| \tag*{(since $\{ f_i \}_i$ has congestion 1)} \\
&= \max\limits_{e'} u_{e'}^{-1} \cdot \sum\limits_{e} \left| \cdot b_{e'} M (u_e b_{e}\tp) \right| \\
&= \| U^{-1} M B\tp U \|_{\infty}
\end{align*}
as required.

\end{document}